\spnewtheorem{observation}[theorem]{Observation}{\bfseries}{\itshape}
\spnewtheorem{ques}{Question}{\bfseries}{\itshape}
\renewcommand{\subparagraph}[2]{\paragraph{\bf #2}}
\renewcommand{\orcidID}[1]{\href{https://orcid.org/#1}{\includegraphics[scale=.03]{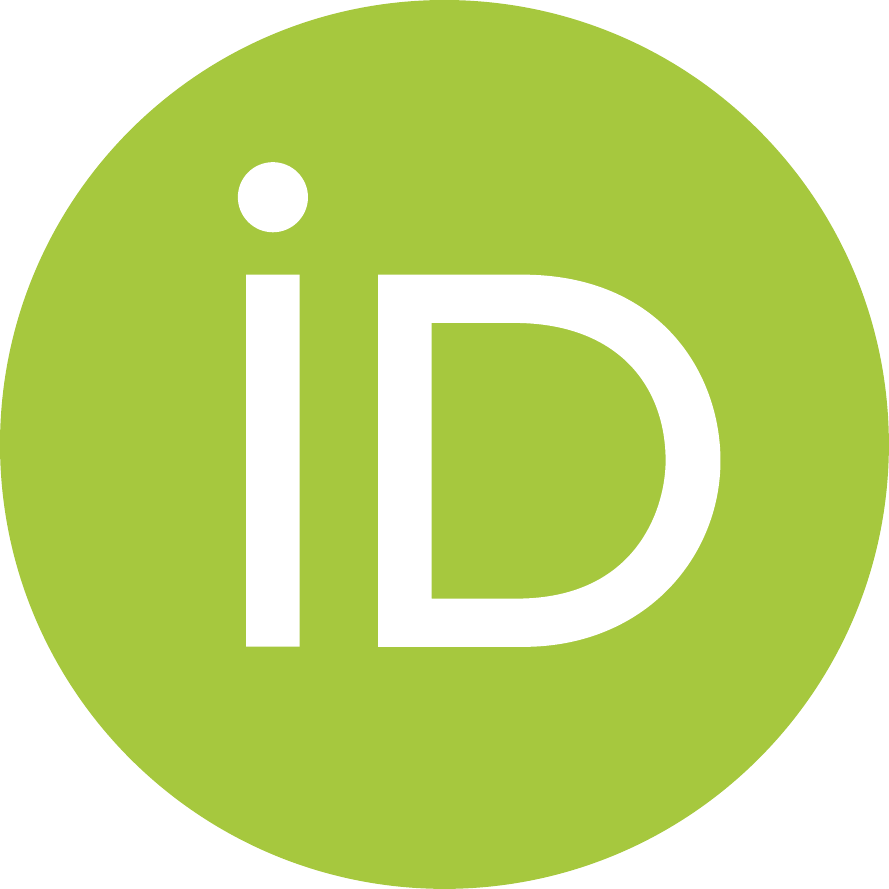}}}
\title{Flipping Plane Spanning Paths
\thanks{This work was initiated at the 2nd Austrian Computational Geometry Reunion Workshop in Strobl, June 2021. We thank all participants for fruitful discussions. J.~O.~is supported by ERC StG 757609.  O.~A. and R.~P.~are supported by FWF grant~W1230. B.~V. is supported by FWF Project \mbox{I 3340-N35}. K.K.~is supported by the German Science Foundation (DFG) within the research training group `Facets of Complexity' (GRK 2434). W.M.~is partially supported by the German Research Foundation within the collaborative DACH project \emph{Arrangements and Drawings} as DFG Project MU 3501/3-1, and by ERC StG 757609.}
}
\author{Oswin Aichholzer\inst{1}\orcidID{0000-0002-2364-0583} \and
Kristin Knorr\inst{2}\orcidID{0000-0003-4239-424X} \and
Wolfgang Mulzer\inst{2}\orcidID{0000-0002-1948-5840} \and \\
Johannes Obenaus\inst{2}\orcidID{0000-0002-0179-125X} \and
Rosna Paul\inst{1}\orcidID{0000-0002-2458-6427} \and
Birgit Vogtenhuber\inst{1}\orcidID{0000-0002-7166-4467}
}
\authorrunning{O. Aichholzer et al.}
\institute{Institute of Software Technology, Graz University of Technology, Austria\\
\email{\{oaich,bvogt,ropaul\}@ist.tugraz.at} \and
Institut für Informatik, Freie Universit{\"a}t Berlin, Germany\\
\email{\{kristin.knorr,wolfgang.mulzer,johannes.obenaus\}@fu-berlin.de}
}
\let\orgdescriptionlabel\descriptionlabel
\renewcommand*{\descriptionlabel}[1]{%
  \let\orglabel\label
  \let\label\@gobble
  \phantomsection
  \edef\@currentlabel{#1\unskip}%
  \let\label\orglabel
  \orgdescriptionlabel{#1}%
}
\newcommand{\PS}{\ensuremath{\mathcal{P}(S)}}
\newcommand{\GDC}{\ensuremath{\mathsf{GDC}}}
\newcommand{\HC}{\ensuremath{C}}  
\newcommand{\CC}{\ensuremath{CC}}  
\newcommand{\mya}{\ensuremath{a}}
\newcommand{\myb}{\ensuremath{b}}
\newcommand{\myc}{\ensuremath{c}}
\newcommand{\myd}{\ensuremath{d}}
\newcommand{\weight}{weight\xspace}
\newcommand{\CH}{\operatorname{CH}}
\begin{document}

\maketitle

\begin{abstract}
Let $S$ be a planar point set in general position, and 
let $\PS$ be the set of all plane straight-line paths with 
vertex set $S$.
A flip on a path $P \in \PS$ is the operation
of replacing an edge $e$ of $P$ with another edge $f$ on $S$ to
obtain a new valid path from $\PS$.
It is a long-standing open question whether for every given point
set $S$, every path from~$\PS$ can be transformed into 
any other path from $\PS$ by a sequence of flips. 
To achieve a better understanding of this
question, we show that it is sufficient to prove the statement for 
plane spanning paths whose first edge is fixed. Furthermore, we
provide positive answers for special classes of point sets, namely, for 
wheel sets and generalized double circles
(which include, e.g., double chains and double circles).

\keywords{flips 
\and plane spanning paths \and generalized double circles}
\end{abstract}

\section{Introduction}

Reconfiguration is a classical and widely studied topic with various applications in multiple areas. A natural way to provide structure for a reconfiguration problem is by studying the so-called \emph{flip graph}. For a class of objects, the flip graph has a vertex for each element and adjacencies are determined by a local flip operation (we will give the precise definition shortly). 
In this paper we are concerned with transforming plane spanning paths via edge flips. 

Let $S$ be a set of $n$ points in the plane in general position 
(i.e., no three points are collinear), and 
let $\PS$ be the set of all plane straight-line 
spanning paths for~$S$, i.e., the set of all paths with vertex set
$S$ whose straight-line embedding 
on~$S$ is crossing-free. 
A \emph {flip} on a path $P \in \PS$ is the operation of
removing an edge $e$ from $P$ and replacing it with another
edge $f$ on $S$ such that
the graph $(P \setminus e) \cup f$ is again a valid path from $\PS$ 
(note that the edges $e$ and~$f$ might cross). The \emph{flip graph}
on $\PS$ has vertex set $\PS$ and two vertices are adjacent if and only if the corresponding paths differ by a single flip.
The following conjecture will be the focus of this paper:

\begin{conjecture}[Akl et al.~\cite{AKL2007}]\label{conj:main}
For every point set $S$ in general position, the flip graph on $\PS$ is connected.
\end{conjecture}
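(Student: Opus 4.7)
The plan is to attack Conjecture~\ref{conj:main} by induction on $n=|S|$, aiming to reduce every path in $\PS$ to a canonical form. Concretely, I would fix an extreme point $v\in\CH(S)$ and try to prove that every $P\in\PS$ can be flipped to a path whose first edge is $vw$ for a specified hull neighbor $w$ of $v$. Once this is established, the tail of the path lies in $\mathcal{P}(S\setminus\{v\})$ with a fixed boundary condition, and induction on $n$ finishes the argument, with a second layer of flips needed to justify that the choice of $w$ is not restrictive. The base cases $n\le 3$ are trivial.

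To carry out the reduction to $vw$, I would first handle easy subcases. If $v$ is already an endpoint of $P$ with incident edge $vw'$, a single flip $vw'\mapsto vw$ may suffice, provided $vw$ does not cross the rest of $P$. If $v$ is an interior vertex of $P$, the plan is to split $P$ at $v$ into two subpaths, flip one of the two edges at $v$ so that $v$ becomes a leaf of the new path, and then rotate the leaf edge to $vw$. A potential function such as the combinatorial distance from $v$ to the nearest leaf of $P$, refined lexicographically by the sorted edge-length profile, could be shown to decrease monotonically under the chosen flips and thereby guarantee termination.

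The hard part will be admissibility: every candidate flip introduces a new edge that must avoid crossing the $n-2$ edges that remain, and this is a genuinely global constraint with no immediate local certificate. It is precisely this difficulty that keeps the conjecture open. In the worst case the productive flip cannot be performed directly, and one must first clear obstructions by preparatory flips. Making this process terminate requires a robust invariant---one that is preserved while obstructions are removed, yet still forces overall progress. A visibility analysis inside the funnel of the current path, or an amortized argument bounding the cost of ``clearing'' sequences, seem like the right tools, but proving that such an invariant exists for arbitrary $S$ in general position is the missing ingredient.

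If the general inductive plan stalls, the natural fallback---explicitly pursued in this paper---is to first prove the conjecture for structured classes such as convex position, wheel sets, and generalized double circles $\GDC$, and then try to extract a common mechanism. These cases provide extra features (a single hull, a natural halving line, or a two-sided combinatorial structure) that allow the canonical target to be chosen so that an explicit flip sequence can be exhibited; understanding which of these features is essential may point toward the missing general argument, and even the abstract reduction suggested in the abstract---restricting attention to paths with a fixed first edge---is already a nontrivial step toward the full statement.
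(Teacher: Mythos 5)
The statement you were asked to prove is labeled a \emph{conjecture} in the paper, and the paper does not prove it; Conjecture~\ref{conj:main} remains open. The paper's actual contributions toward it are (a) Theorems~\ref{lem:implication_2_1} and \ref{lem:implication_3_2}, which reduce Conjecture~\ref{conj:main} to the seemingly easier Conjecture~\ref{conj:startedge} about paths with a fixed first edge, and (b) proofs of the conjecture for convex position (prior work), wheel sets, and generalized double circles. Your proposal correctly recognizes that no complete proof is available and that the conjecture hinges on an unresolved difficulty, so as an honest appraisal of the state of the problem your write-up is sound; but it is not, and does not claim to be, a proof.

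Comparing your sketch to the paper's reduction: you propose an induction on $n$ that deletes an extreme hull vertex $v$ after forcing every path to start with a fixed hull edge $vw$, so that the tail lives on $S\setminus\{v\}$. The paper instead argues at fixed $n$: given $P_s, P_t \in \PS$, it builds intermediate paths sharing endpoints (Lemma~\ref{lem:pathexists}) and intermediate plane spanning cycles (Lemma~\ref{lem:cycle}), and shows that connectivity of the flip graphs $\mathcal{P}(S,p)$ and $\mathcal{P}(S,p,q)$ is enough to chain these together. These are genuinely different decompositions. Your inductive route has the appeal that, if it worked, it would bootstrap from small $n$; but note a subtlety you gloss over: even after your induction succeeds on $S\setminus\{v\}$ for a specific $w$, you still need the flips on the tail to lift to flips on all of $S$ that respect $vw$, which is fine because $vw$ is a hull edge, but you then need a second argument (your ``second layer'') to free up the choice of $w$, and that argument is precisely a special case of the paper's Lemma~\ref{lem:cycle} rotation. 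The paper's fixed-$n$ reduction avoids the induction but, like yours, defers the essential difficulty to Conjecture~\ref{conj:startedge}.

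The concrete gap in your proposal is the same gap that keeps the conjecture open, and you name it yourself: there is no known potential function or invariant that both (i) decreases under some available plane-preserving flip from every non-canonical path and (ii) is preserved or controlled while obstructions are cleared, for \emph{arbitrary} point sets in general position. The paper supplies such a measure only under structural hypotheses: for wheel sets the spine-edge count strictly increases, and for $\GDC$ the combination of spine-edge count and the \weight measure $d(\cdot,\cdot)$ works, but both arguments lean on emptiness of certain triangles (Observation~\ref{obs:valid_flips}, Lemma~\ref{app:lem:prel_gdc_1}) that fails in general. A useful corrective to your plan: the paper also shows (Figure~\ref{fig:threevertexpath}) that the analogue of Conjecture~\ref{conj:startedge} with the first \emph{three} vertices fixed is false, so any attempt to strengthen the fixed-prefix constraint beyond a single edge, which an overly aggressive induction might be tempted to do, is doomed.
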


\subparagraph*{Related work.} 

For further details on reconfiguration problems in general we refer the reader to the surveys of Nishimura~\cite{nishimura2018} and Bose and Hurtado~\cite{survey}.
Connectivity properties of flip graphs have been studied extensively in a huge variety of settings, see, e.g., \cite{hernando2002graphs,houle,lawson1972transforming,nichols2020,Wagner1936} for results on triangulations, matchings and trees.

In our setting of plane spanning paths, flips are much more restricted, making it more difficult to prove a positive answer. Prior to our work only results for 
point sets in convex position and very small point sets were known. Akl et al.~\cite{AKL2007}, who initiated the study of flip connectivity on plane spanning paths, showed connectedness of the flip graph on $\PS$ if $S$ is in convex position or $|S| \leq 8$. In this convex setting, Chang and Wu~\cite{CW2009} derived tight bounds concerning the diameter of the flip graph, namely, $2n - 5$ for $n = 3, 4$, and $2n - 6$ for $n \geq 5$.

For the remainder of this paper, we consider the flip graph on $\PS$ (or a subset of $\PS$). Moreover, unless stated otherwise, the word \emph{path} always refers
to a path from $\PS$ for an underlying point set
$S$ that is clear from the context. 

\begin{figure}[t]
	\centering
	{\includegraphics[page=1,scale=0.55]{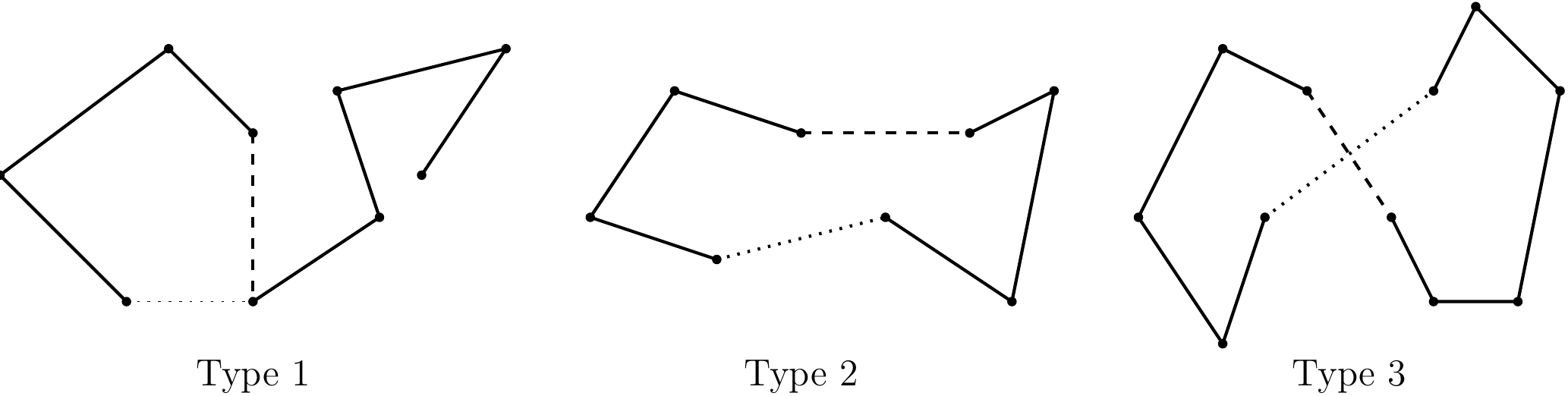}}
\caption{\label{fig:fliptype}
	The three types of flips in plane spanning paths.
	}
\end{figure}

\subparagraph*{Flips in plane spanning paths.}
Let us have a closer look at the different types of possible flips 
for a path $P~=~v_1, \dots, v_n~\in~\PS$
(see also Figure~\ref{fig:fliptype}). 
When removing an edge $v_{i - 1}v_i$ from $P$ with $2 \leq i \leq n$, 
there are three possible
new edges that can be added in order to obtain a path (where, of course,
not all three choices will necessarily lead to a valid path in $\PS$):
$v_1v_i$, $v_{i - 1}v_n$, and~$v_1v_n$. 
A flip of \emph{Type 1} is a valid flip that adds the edge
$v_1v_i$ (if $i > 2$) or the edge $v_{i-1}v_n$ (if $i < n$).
It results in 
the path $v_{i - 1}, \dots, v_1, v_i, \dots, v_n$, or the path
$v_1, \dots, v_{i - 1}, v_n, \dots, v_i$. That is, a 
Type 1 flip inverts a contiguous chunk from one of the two ends of the path.
A flip of \emph{Type 2} adds the edge $v_1v_n$ and has the additional property
that the edges $v_{i - 1}v_i$ and $v_1v_n$ do not cross.
In this case, the path $P$ together with the edge $v_1v_n$ forms a plane cycle.
If a Type 2 flip is possible for one 
edge $v_{i - 1}v_i$ of $P$, then it is possible for all edges of $P$.
A Type 2 flip can be simulated by a sequence of Type 1 flips, e.g., flip
$v_1v_2$ to $v_1v_n$, then flip $v_2v_3$ to $v_1v_2$, then $v_3v_4$ to
$v_2v_3$, etc., until flipping $v_{i - 1}v_{i}$ to $v_{i - 2}v_{i - 1}$.
A flip of \emph{Type 3} also adds the edge $v_1v_n$, but now the
edges $v_1v_n$ and $v_{i - 1}v_i$ cross. 
Note that a Type 3 flip is 
only possible
if the edge $v_1v_n$ crosses exactly one edge of $P$, and then the flip
is possible only for the edge $v_{i - 1}v_i$ that is crossed.

\subparagraph*{Contribution.} 
We approach Conjecture~\ref{conj:main} from two directions. 
First, we show that it is sufficient to prove flip connectivity for paths with a fixed starting edge. 
Second, we verify Conjecture~\ref{conj:main} for several classes of point sets, namely wheel sets and generalized double circles (which include, e.g., double chains and double circles).

Towards the first part, we define, 
for two distinct points $p, q \in S$, the following subsets of $\PS$: 
let $\mathcal{P}(S, p)$ 
be the set of all plane spanning paths for $S$ that start at $p$, 
and let $\mathcal{P}(S, p, q)$ be the set of all plane spanning paths 
for $S$ that start at $p$ and continue with $q$. 
Then for any $S$, the flip graph on $\mathcal{P}(S, p, q)$ is 
a subgraph of the flip graph on $\mathcal{P}(S, p)$, which in turn is 
a subgraph of the flip graph on~$\PS$. 
We conjecture that all these flip graphs are connected:

\begin{conjecture}\label{conj:startpoint}
For every point set $S$ in general position and every $p \in S$, the flip graph on $\mathcal{P}(S,p)$ is connected.
\end{conjecture}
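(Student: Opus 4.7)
The plan is to reduce Conjecture~\ref{conj:startpoint} to the (stronger) statement that $\mathcal{P}(S,p,q)$ is flip-connected for every second vertex~$q$, which the paper has just highlighted as being sufficient even for the main conjecture. The starting point is to classify which flips keep $p$ as an endpoint: a Type~1 flip at edge $v_{i-1}v_i$ preserves $p=v_1$ as endpoint only when it inverts the tail (adding $v_{i-1}v_n$), not when it inverts the prefix (adding $v_1v_i$); a Type~2 flip preserves $p$ as an endpoint only when the removed edge is $pv_2$ itself, in which case the result $p,v_n,v_{n-1},\ldots,v_2$ swaps the roles of the second vertex and the other endpoint; and a Type~3 flip at $i=2$ is impossible in general position since $pv_2$ and $pv_n$ share the point~$p$. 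All other flips destroy $p$'s status as endpoint.

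Therefore, inside $\mathcal{P}(S,p,v_2)$ only tail-inverting Type~1 flips are available, so under the hypothesis that $\mathcal{P}(S,p,v_2)$ is flip-connected, any two paths sharing the second vertex are mutually reachable. The remaining work is to move between different second vertices, which must use an endpoint-swapping Type~2 flip at $i=2$. My plan therefore reduces to proving the following \emph{swap lemma}: for every $P \in \mathcal{P}(S,p)$ and every target $q \in S \setminus \{p\}$, one can reach by flips inside $\mathcal{P}(S,p)$ a path $P' = (p,v_2,\ldots,v_n)$ with $v_n = q$ such that the segment $pq$ does not cross $P'$. Applying the Type~2 flip at $i=2$ to $P'$ then produces a path with second vertex~$q$, and chaining swaps with internal rearrangements connects any two paths in $\mathcal{P}(S,p)$.

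The main obstacle is the swap lemma, specifically the no-crossing requirement for $pq$. The natural attack is induction on $|S|$ combined with a rotational sweep around $p$: if $q$ lies on the convex hull of $S$, then one can build $P'$ starting at $p,v_2$ and ending at $q$ so that $P' \cup pq$ bounds a plane region, which forces the non-crossing condition; for interior $q$, propagate via a sequence of convex-hull vertices, each reachable by a single endpoint swap combined with internal rearrangements supplied by the inductive hypothesis. The case when $p$ lies in the interior of the convex hull of $S$ is likely to be the most delicate, since the geometric constraints on the removed edge $pv_2$ tighten, and one may need to invoke additional structural properties of $S$ akin to those used later in the paper for wheel sets and generalized double circles.
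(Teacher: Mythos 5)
First, a framing remark: the statement you were asked to prove is labelled a \emph{conjecture} in the paper and is not proved there. What the paper actually establishes is Theorem~\ref{lem:implication_3_2}, namely that Conjecture~\ref{conj:startedge} implies Conjecture~\ref{conj:startpoint}, and your proposal is effectively an attempt at that reduction. With that reading, let me compare.

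Your classification of which flips preserve $p$ as an endpoint (only tail-inverting Type~1 flips stay inside $\mathcal{P}(S,p,v_2)$, only a Type~2 flip removing $pv_2$ changes the second vertex while keeping $p$ an endpoint, Type~3 at $i=2$ is impossible) is correct and matches the role these moves play in the paper's argument.

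The gap is in the ``swap lemma.'' As stated, it asks for \emph{arbitrary} target $q$: you want to reach a path $P'$ from $p$ to $q$ with $pq\cup P'$ a plane cycle. But the existence of a plane spanning cycle through $p$ whose two neighbours are a prescribed $v_2$ and a prescribed $q$ can fail outright: take $p$ at the centre of a regular configuration of points on a circle and let $v_2$ and $q$ be (approximately) diametrically opposite. Any tour from $v_2$ to $q$ visiting the remaining points must cross one of the radial segments $pv_2$ or $pq$, so no such cycle exists and no such $P'$ can be produced. Consequently the sub-claim in your sketch that ``one can build $P'$ starting at $p,v_2$ and ending at $q$ so that $P'\cup pq$ bounds a plane region'' is false in general, and neither the case split on $q$ being extreme vs.\ interior nor the proposed induction on $|S|$ repairs it.

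The missing idea is the restriction to \emph{consecutive} targets: the paper's Lemma~\ref{lem:cycle} shows that when $q$ and $r$ are consecutive (among viable second vertices) in the radial order around $p$, a plane spanning cycle containing both $pq$ and $pr$ always exists, built from the same radial-sweep construction as Lemmas~\ref{lem:pathexists} and~\ref{lem:viable_start_edge}. The reduction then rotates the starting edge one radial step at a time: use Conjecture~\ref{conj:startedge} to reach $C\setminus pr$ (still starting with $pq$), then perform the single Type~2 flip $pq\mapsto pr$ to arrive at a path starting with $pr$, and iterate around the radial order until the second vertex matches the target. If you weaken your swap lemma to neighbouring radial positions and supply the explicit cycle, your plan becomes the paper's proof of Theorem~\ref{lem:implication_3_2}; without that restriction it does not go through.

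Finally, note that even a correct version of this argument yields only the implication; Conjecture~\ref{conj:startpoint} itself remains open because Conjecture~\ref{conj:startedge} is unproven for general point sets.
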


\begin{conjecture}\label{conj:startedge}
For every point set $S$ in general position and every $p,q \in S$, the flip graph on $\mathcal{P}(S,p,q)$ is connected.
\end{conjecture}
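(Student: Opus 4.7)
The plan is to attack Conjecture~\ref{conj:startedge} by induction on $|S|$, combined with a canonical-form reduction inside $\mathcal{P}(S,p,q)$. The first step is to determine which flips remain available once the starting edge $pq$ is frozen. Type~1 flips that insert the edge $v_1 v_i$ reverse the prefix and thus change the first vertex, while Type~2 and Type~3 flips insert the edge $v_1 v_n$ and therefore destroy either the first vertex (for $i \ge 3$) or at least the edge $pq$ itself (for $i = 2$). The only admissible operations on a path $P = p, q, v_3, \ldots, v_n$ are therefore the suffix-reversal flips: for some index $3 \le i \le n - 1$, remove the edge $v_{i-1} v_i$ and insert $v_{i-1} v_n$, yielding $p, q, v_3, \ldots, v_{i-1}, v_n, v_{n-1}, \ldots, v_i$. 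Such a flip is valid precisely when the new edge $v_{i-1} v_n$ crosses none of the retained edges, in particular not the fixed edge $pq$.

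I would then fix a canonical path $P^* \in \mathcal{P}(S,p,q)$ and aim to transform every $P \in \mathcal{P}(S,p,q)$ into $P^*$ using these suffix-reversal flips. A plausible candidate for $P^*$ is a greedy sweep from $q$ that always advances to the convex-hull vertex of the unvisited point set lying on a prescribed side of the current directed edge, while never crossing $pq$. The induction on $|S|$ would then be driven by the following auxiliary claim: from any $P \in \mathcal{P}(S,p,q)$ there is a sequence of suffix-reversal flips that brings a designated canonical vertex $v^* \in S \setminus \{p,q\}$ into the last position of $P$. Once $v^*$ is pinned at the tail, it may be dropped from $S$ and the induction hypothesis applied to $\mathcal{P}(S \setminus \{v^*\}, p, q)$, closing the argument.

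The principal obstacle is exactly this auxiliary claim, since suffix reversal is a severely restricted operation: admissibility at index $i$ is a global planarity test on the single new edge $v_{i-1} v_n$, and internal vertices of the path cannot be freely permuted. I would try to control the migration of $v^*$ toward the tail via a potential function, for instance the number of edges of $P$ separating $v^*$ from the current endpoint, or a lexicographic distance from $P$ to $P^*$, and show that whenever $P \neq P^*$ at least one admissible suffix-reversal flip strictly decreases the potential. Designing such a potential that works uniformly over all point sets in general position is the heart of the difficulty and the most likely reason why the conjecture remains open; as a concrete first checkpoint, I would test the potential argument on the structured families handled elsewhere in the paper, namely wheel sets and generalized double circles, before attempting the general case.
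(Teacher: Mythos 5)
The statement you are attempting is Conjecture~\ref{conj:startedge}, which the paper states but does \emph{not} prove; it is explicitly left open, and the authors only show (Theorems~\ref{lem:implication_2_1} and~\ref{lem:implication_3_2}) that proving it would suffice for Conjecture~\ref{conj:main}. Your opening observation is correct and worth having on record: in $\mathcal{P}(S,p,q)$ the only admissible moves are suffix-reversal Type~1 flips (remove $v_{i-1}v_i$, insert $v_{i-1}v_n$, for $3 \le i \le n-1$), since a prefix-reversal Type~1 flip removes $p$ from the endpoint position, and any flip that inserts $v_1v_n$ (Type~2 or Type~3) yields a path in which $p$ is either interior or no longer adjacent to $q$. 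The paper itself notes, right after Theorem~\ref{lem:implication_3_2}, that fixing the first edge trades a smaller flip graph for a more restricted move set, and the conclusion even floats an inductive approach as a plausible route, so you are pointed in a direction the authors consider worth pursuing.

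Beyond that observation, however, the proposal is a plan, not a proof, and the two load-bearing steps are both unresolved. First, the auxiliary claim that some designated vertex $v^*$ can always be driven to the tail by suffix reversals alone is not established and you give no potential function certifying it; admissibility of a suffix reversal at index $i$ is a global planarity condition on $v_{i-1}v_n$ that depends on the whole current path, and there is no a~priori reason a ``greedy sweep'' target $P^*$ is even reachable. Second, even granting that claim, the induction does not lift cleanly: once $v^*$ occupies position $v_n$, every admissible suffix reversal at index $i\le n-1$ sends $v_i$ to the tail and displaces $v^*$, so no nontrivial flip keeps $v^*$ fixed at the end; and a flip sequence produced by the induction hypothesis on $\mathcal{P}(S\setminus\{v^*\},p,q)$ need not extend to $\mathcal{P}(S,p,q)$, because re-attaching $v^*$ to the intermediate paths can create crossings and changes which vertex plays the role of $v_n$ in each move. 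Finally, the paper's results for wheel sets and generalized double circles establish only Conjecture~\ref{conj:main} for those families, not Conjecture~\ref{conj:startedge} (the canonical spine paths used there do not fix the starting edge), so your proposed sanity check on those classes is itself an open sub-problem rather than something one can read off from Sections~\ref{sec:wheel} and~\ref{sec:GDC}.
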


Towards Conjecture~\ref{conj:main}, we show that it suffices to prove Conjecture~\ref{conj:startedge}:

\begin{restatable}{theorem}{lemImplicationTwoOne}\label{lem:implication_2_1}
Conjecture~\ref{conj:startpoint} implies Conjecture~\ref{conj:main}.
\end{restatable}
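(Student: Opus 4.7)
My plan is to leverage Conjecture~\ref{conj:startpoint} to transport flip-connectivity across paths that share an endpoint, and then propagate it to all of $\PS$ through an auxiliary graph on~$S$. The core observation (the \emph{shared-endpoint principle}) is that if two paths $P, P' \in \PS$ share an endpoint $u$, then both lie in $\mathcal{P}(S,u)$, which is flip-connected by Conjecture~\ref{conj:startpoint}; hence $P$ and $P'$ are flip-connected in $\PS$. Note that a flip sequence inside $\mathcal{P}(S,u)$ is automatically a flip sequence in $\PS$, so this principle is entirely legitimate.

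To handle paths with disjoint endpoint sets, I introduce an auxiliary graph $G$ on~$S$: two points $u, v \in S$ are adjacent in $G$ iff some path in $\PS$ has $u$ and $v$ as its two endpoints. Suppose $G$ is connected, and let $P, P' \in \PS$ have endpoints $u_1$ and $u_2$ respectively. Take a $G$-path $u_1 = w_0, w_1, \ldots, w_k = u_2$ and, for each edge $(w_i, w_{i+1})$, pick a witnessing plane spanning path $Q_i \in \PS$ with those endpoints. The sequence $P, Q_0, Q_1, \ldots, Q_{k-1}, P'$ has consecutive pairs sharing the endpoints $w_0, w_1, \ldots, w_k$ in turn, so iterating the shared-endpoint principle flip-connects $P$ to $P'$. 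Hence it suffices to prove that $G$ is connected for every point set $S$ in general position.

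The connectedness of $G$ reduces to two geometric existence statements: (i)~for any two convex hull vertices $p, q$ of $S$, there is a plane spanning path of $S$ with endpoints $p$ and $q$; and (ii)~every interior point of $S$ is the endpoint of some plane spanning path whose other endpoint lies on the convex hull. Fact~(i) collects all convex hull vertices into a single $G$-component, while fact~(ii) attaches each interior point to that component.

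The main obstacle is fact~(i). A natural route is induction on $|S|$: for prescribed convex hull vertices $p, q$, choose a third convex hull vertex $w$, apply the inductive hypothesis on $S \setminus \{w\}$ to obtain a plane spanning path from $p$ to $q$, and reinsert $w$ by subdividing an edge of this path that is visible from $w$ on the outer side of the hull. The delicate point is guaranteeing that such a visible edge exists and that the reinsertion preserves planarity; this typically uses that $w$ sees an entire ``outer'' region undisturbed by the path. Fact~(ii) can be handled by a similar construction or by starting from (i) on $S \setminus \{p\}$ and inserting the interior point $p$ at a suitable position. With both facts in hand, the argument closes cleanly.
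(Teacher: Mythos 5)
Your high-level reduction is exactly right and matches the paper's: by the shared-endpoint principle, it suffices to know that the auxiliary graph $G$ on $S$ (where $u \sim v$ iff some path in $\PS$ has endpoints $u,v$) is connected, and then one can chain paths through common endpoints. The paper's Lemma~\ref{lem:pathexists} in fact proves something stronger and cleaner than what you aim for: $G$ is \emph{complete}, i.e.\ \emph{every} pair $p,q \in S$ occurs as the endpoint pair of some plane spanning path. This collapses your auxiliary-graph machinery (hull component plus attachment of interior points) to a single application of the lemma, giving a three-term chain $P_s \to P_m \to P_t$.

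The genuine gap is in your proposed proof of fact~(i). You want to remove a hull vertex $w$, get a plane spanning path on $S\setminus\{w\}$ from $p$ to $q$ by induction, and reinsert $w$ by subdividing an edge ``visible from $w$ on the outer side.'' But the existence of a suitable edge $uv$ on the path such that $uw$ and $wv$ do not cross other path edges and the triangle $uwv$ is empty is precisely the crux, and you only assert it (``this typically uses that $w$ sees an entire outer region''). A path, unlike a polygon, does not have a well-defined outer face, and while $w$ is outside $\CH(S\setminus\{w\})$, it is not immediate that one of the path's edges is reachable in the required way without a careful visibility argument. The paper sidesteps induction entirely with a direct construction: if at least one of $p,q$ is interior, sort the other points radially around it; if both are extreme, take (perturbed, non-parallel) tangent lines through $p$ and $q$, let them meet at a point $x$ outside $\CH(S)$, and sort radially around $x$ — then $p$ and $q$ are automatically first and last. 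That construction both fills the hole in your argument and removes the need to split into hull and interior cases at all.
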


\begin{restatable}{theorem}{lemImplicationThreeTwo}\label{lem:implication_3_2}
Conjecture~\ref{conj:startedge} implies Conjecture~\ref{conj:startpoint}.
\end{restatable}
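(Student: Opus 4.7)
The plan is to reduce Conjecture~\ref{conj:startpoint} for $(S,p)$ to Conjecture~\ref{conj:startedge} on a slightly enlarged point set, via a \emph{ghost vertex} trick. Concretely, I would add one auxiliary point $p'$ very close to $p$, forming $S':=S\cup\{p'\}$, and show that plane spanning paths in $\mathcal{P}(S,p)$ are in flip-preserving bijection with plane spanning paths in $\mathcal{P}(S',p',p)$ via prepending $p'$. Conjecture~\ref{conj:startedge} applied to $(S',p',p)$ then transfers to Conjecture~\ref{conj:startpoint} for $(S,p)$ through this bijection.

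For the placement of $p'$, I would first observe that, since $S$ is in general position, no closed segment with both endpoints in $S\setminus\{p\}$ passes through $p$. Hence there is an open ball $B$ around $p$ disjoint from every such segment. I would pick $p'\in B$ in general position with $S$. This guarantees that the edge $p'p$, being contained in $B$, does not cross any segment with both endpoints in $S\setminus\{p\}$; moreover, it cannot cross any segment sharing the endpoint $p$, since in general position two segments sharing an endpoint never cross. Consequently, prepending $p'$ to any path in $\mathcal{P}(S,p)$ yields a plane spanning path in $\mathcal{P}(S',p',p)$.

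Next, I would verify that the prepending map $\phi\colon \mathcal{P}(S,p)\to\mathcal{P}(S',p',p)$ is an isomorphism of flip graphs. Surjectivity is immediate: every element of $\mathcal{P}(S',p',p)$ has the form $p',p,v_2,\dots,v_n$, whose restriction to $S$ lies in $\mathcal{P}(S,p)$. For the flip correspondence, note that every flip in $\mathcal{P}(S',p',p)$ must leave the edge $p'p$ intact, since otherwise the resulting path could no longer start with $p'$ and continue with $p$. Hence any such flip exchanges two edges both lying in the $S$-part of the path, and the same exchange is a valid flip in $\mathcal{P}(S,p)$. Conversely, any flip in $\mathcal{P}(S,p)$ lifts to a flip in $\mathcal{P}(S',p',p)$ by keeping $p'p$ unchanged; this remains valid because $p'p$ does not cross any edge of the new $S$-path, again by the small-ball argument.

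Applying Conjecture~\ref{conj:startedge} to $(S',p',p)$ gives connectedness of the flip graph on $\mathcal{P}(S',p',p)$, which under the isomorphism $\phi$ gives connectedness of the flip graph on $\mathcal{P}(S,p)$, establishing Conjecture~\ref{conj:startpoint}. The main delicate point is the placement of the ghost vertex: a careless choice of $p'$ could create crossings with edges that appear in some $S$-path, breaking both the well-definedness of $\phi$ and the equivalence of flip-validity. The small-ball construction is precisely what renders $p'p$ geometrically inert with respect to the rest of $S$, ensuring that planarity and flip validity transfer cleanly in both directions.
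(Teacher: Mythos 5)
Your proof is correct but genuinely different from the paper's. The paper proves the implication \emph{within} a fixed point set $S$: it uses Lemma~\ref{lem:cycle} to construct, for any two radially consecutive viable starting points $q,r$ around $v_1$, a plane spanning cycle containing both $v_1q$ and $v_1r$; it then applies Conjecture~\ref{conj:startedge} on $\mathcal{P}(S,v_1,q)$ to reach that cycle minus $v_1r$, performs one flip to pivot the starting edge from $v_1q$ to $v_1r$, and iterates around the radial order until the target starting edge is reached. Your ghost-vertex reduction instead reads off Conjecture~\ref{conj:startpoint} for $(S,p)$ directly from Conjecture~\ref{conj:startedge} applied to the \emph{enlarged} instance $(S\cup\{p'\},p',p)$, via the flip-graph isomorphism $\phi$ given by prepending $p'$. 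Both are valid; a few points worth noting. First, your argument is more economical: it dispenses with Lemmas~\ref{lem:viable_start_edge} and~\ref{lem:cycle} entirely, and the small-ball argument for the inertness of $p'p$ is clean (nonemptiness of $\mathcal{P}(S',p',p)$ is automatic from surjectivity of $\phi$ together with $\mathcal{P}(S,p)\neq\emptyset$). Second, the paper's proof is \emph{level-preserving}: it invokes the hypothesis only on the same $n$-point set, whereas yours needs it on an $(n+1)$-point set. For the implication as stated between two universally quantified conjectures this makes no logical difference, but it matters if one later wants to prove Conjecture~\ref{conj:startedge} by induction on $n$ and bootstrap Conjecture~\ref{conj:startpoint} at the same level, or to transfer quantitative diameter bounds without blowup. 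Third, the paper's rotation argument stays inside $S$, which keeps all intermediate paths in $\mathcal{P}(S)$ and thus yields explicit diameter control in terms of the diameter of $\mathcal{P}(S,\cdot,\cdot)$ and the number of viable starting edges; your argument gives diameter control only in terms of the enlarged instance. None of these are gaps in your proof — they are trade-offs of the two approaches.
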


\begin{figure}[t]
	\centering
	{\includegraphics[page=2,scale=0.7]{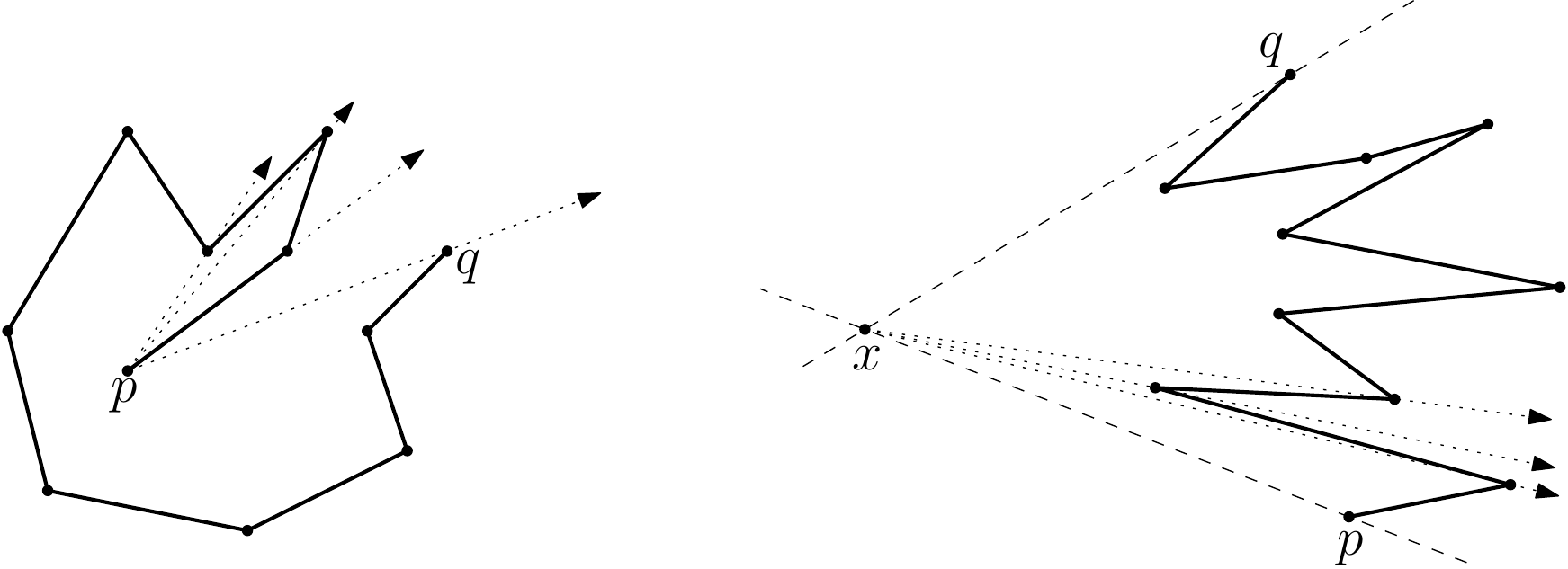}}
\caption{\label{fig:threevertexpath}
	Example where the flip graph is disconnected 
	if the first three vertices of the paths are fixed. 
	No edge of the solid path can be
flipped, but there is at 
	least one other path (dotted) with the same three starting vertices.
	}
\end{figure}

Note that the analogue of
Conjecture~\ref{conj:startedge} for paths where the first $k \geq 3$ vertices are fixed,
does not hold:
Figure~\ref{fig:threevertexpath} shows a counterexample with $7$ 
points and $k=3$. 

\begin{wrapfigure}{r}{2.8cm}
\includegraphics[page=1,scale=0.67]{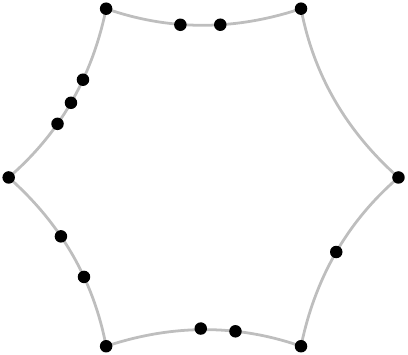}
\end{wrapfigure}
Towards the flip connectivity for special classes of point sets, we consider wheel sets and generalized double circles. A point set is in \emph{wheel configuration} if it has exactly one point inside the convex hull. For generalized double circles we defer the precise definition to Section~\ref{sec:GDC}, however, intuitively speaking a generalized double circle is obtained by replacing
each edge of the convex hull by a flat enough concave chain of
arbitrary size (as depicted on the right). 
We show that the flip graph is connected in both cases:

\begin{restatable}{theorem}{thmWheel}\label{thm:wheel}
Let $S$ be a set of $n$ points in wheel configuration. Then the flip graph (on $\PS$) is connected with diameter at most $2n-4$.
\end{restatable}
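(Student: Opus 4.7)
The plan is to fix a canonical target path $P^{\ast} := c_1 c_2 \cdots c_{n-1} p$, where $c_1, \ldots, c_{n-1}$ are the convex hull vertices of $S$ in cyclic order and $p$ is the unique interior point. This path is plane because its hull prefix is convex and the final radial edge $c_{n-1}p$ cannot cross any edge. I will show that every $P \in \PS$ lies within at most $n-2$ flips of $P^{\ast}$; the triangle inequality then gives diameter at most $2n-4$.

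\textbf{Phase I: moving $p$ to an endpoint in at most one flip.}
Write $P = v_1 v_2 \cdots v_n$ and suppose $p = v_i$ with $2 \le i \le n-1$. I consider two candidate Type~1 flips. The first removes $v_{i-1}p$ and inserts $v_{i-1}v_n$, producing the path $v_1 \cdots v_{i-1} v_n v_{n-1} \cdots v_{i+1} p$, which has $p$ as its last vertex. The second removes $p\,v_{i+1}$ and inserts $v_1 v_{i+1}$, producing the path $p\, v_{i-1} \cdots v_1 v_{i+1} \cdots v_n$, which has $p$ as its first vertex. The key geometric claim is that at least one of these flips is valid. Intuitively the two candidate chords $v_{i-1}v_n$ and $v_1 v_{i+1}$ lie on ``opposite sides'' of the interior point $p$; combined with planarity of $P$ and the fact that $p$ sits strictly inside the convex hull, I aim to show they cannot simultaneously be blocked by crossings with the remaining edges of $P$. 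The argument is a planar case analysis on which edges of $P$ can cross each candidate chord.

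\textbf{Phase II: sorting the hull portion in at most $n-3$ flips.}
Once $p$ is an endpoint, assume WLOG $p = v_n$. The prefix $v_1 \cdots v_{n-1}$ is then a plane Hamiltonian path on the $n-1$ convex hull vertices. I transform it into $c_1 c_2 \cdots c_{n-1}$ using only prefix-reversing Type~1 flips, i.e.\ those that invert some prefix $v_1 \cdots v_{j-1}$ with $j \le n-1$; these leave $p$ fixed at the tail and preserve the hanging edge $v_{n-1}p$. Since all vertices involved lie in convex position, planarity of each intermediate path follows from the fact that on convex points each prefix-reversal is automatically plane whenever the added chord does not cross the hull portion, a property one can check by a direct cyclic-order argument. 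A pancake-sort style induction (matching one more canonical vertex per flip, beginning from the free endpoint) yields the required bound of $n-3$ flips, in line with the Akl et al.\ and Chang--Wu analysis for convex point sets.

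\textbf{Main obstacle.}
The crux is the one-flip claim in Phase~I. It is a purely geometric statement that relies essentially on $p$ being strictly inside the hull, and its proof hinges on ruling out a simultaneous crossing obstruction to both candidate chords. Should the two-candidate approach fail in a corner configuration, I would fall back on a Type~3 flip using the chord $v_1 v_n$: in a wheel configuration the hull endpoints $v_1$ and $v_n$ see each other through a controlled region around $p$, so an argument about how many edges of $P$ can be crossed by $v_1 v_n$ provides the needed alternative. Phase~II is a routine sorting argument on convex points, but care is required to verify that each prefix reversal remains plane and respects the boundary edge $v_{n-1}p$ to $p$.
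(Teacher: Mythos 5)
Your proposal takes a genuinely different route from the paper (Phase~I isolates the centre, Phase~II sorts the hull prefix), but the crux of Phase~I is false. Consider the wheel set with hull points $c_1,\dots,c_6$ at the vertices of a regular hexagon ($c_k$ at angle $60k^\circ$) and centre $p$ near the origin, and the plane spanning path
\[
P = c_1,\ c_6,\ c_2,\ p,\ c_3,\ c_5,\ c_4 .
\]
Here $p=v_4$, so your two candidate flips are: (a) delete $v_3 p = c_2 p$ and insert $v_3 v_n = c_2 c_4$, and (b) delete $p v_5 = p c_3$ and insert $v_1 v_5 = c_1 c_3$. Flip~(a) fails because $c_2c_4$ is the vertical chord $x=-\tfrac12$, which crosses the remaining radial edge $pc_3$. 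Flip~(b) fails because $c_1c_3$ and $c_6c_2$ are two ``long'' hexagon diagonals that cross each other. So neither flip is valid, and your fallback via $v_1v_n = c_1c_4$ also fails since $c_1c_4$ crosses \emph{two} path edges ($c_6c_2$ and $c_3c_5$), ruling out a Type~3 flip. Hence $p$ cannot be moved to an endpoint in one flip, and the $n-2$ budget for reaching $P^\ast$ already breaks.

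Phase~II also has unsubstantiated steps. Prefix reversals on a convex point set are \emph{not} automatically plane — e.g.\ on a convex hexagon path $c_1,c_6,c_2,c_5,c_3,c_4$, flipping $c_5c_3$ to $c_1c_3$ introduces a crossing with $c_6c_2$. More importantly, your reversals fix $j\le n-1$, so the hull vertex $v_{n-1}$ adjacent to $p$ never moves; you therefore cannot reach $P^\ast$ at all unless $v_{n-1}=c_{n-1}$ happens to hold. And a pancake-style bound of $n-3$ flips to hit a \emph{fixed} target hull ordering is not established — Chang--Wu's analysis gives diameter $2m-6$ for $m$ convex points, not $m-2$.

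The paper's proof is structurally different and avoids these issues: it argues locally that any non-canonical path admits a single spine-edge-increasing flip, using the key observation that if the first edge of the path is a hull edge then the path is forced to already be canonical (via the observation that an inner edge separates the two subpaths onto opposite sides of its supporting line). This turns the argument into an induction on the number of spine edges with no global ``move $p$ then sort'' decomposition, and in particular handles paths like the one above where the centre $p$ is geometrically ``caged'' between two crossing chords.
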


\begin{restatable}{theorem}{thmGDC}\label{thm:GDC}
	Let $S$ be a set of $n$ points in generalized double circle configuration. Then the flip graph (on $\PS$) is connected with diameter at most~$O(n^2)$.
\end{restatable}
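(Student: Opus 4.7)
The plan is to fix a canonical plane spanning path $P^*$ of $S$ and show that every $P\in\PS$ can be reconfigured to $P^*$ in $O(n^2)$ flips; by the reversibility of flips and the triangle inequality in the flip graph, this yields both connectivity and the claimed diameter bound. With $h_1,\dots,h_k$ the vertices of $\CH(S)$ in cyclic order and $C_i$ the sequence of points of the concave chain between $h_i$ and $h_{i+1}$ (indices mod $k$), I would choose $P^*$ to walk along the GDC boundary, i.e., $P^*=h_1,C_1,h_2,C_2,\dots,h_k,C_k$, reading each $C_i$ in its boundary order from $h_i$ to $h_{i+1}$. The flatness built into the definition of a generalized double circle guarantees that $P^*$ is plane.

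The reconfiguration of $P$ into $P^*$ would proceed in $n-1$ phases. After phase $j$, the first $j+1$ vertices of the current path agree with those of $P^*$, while the remaining suffix is a plane spanning path on the untouched points. In phase $j+1$ the next vertex $w$ of $P^*$ has to be brought into position $j+2$ without disturbing the locked prefix. I would prove by induction on $n-j$ that this step costs $O(n-j)$ flips, for a total of $\sum_{j=0}^{n-1}O(n-j)=O(n^2)$.

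The workhorse of the induction is the observation that, because the prefix of $P^*$ has been peeled off one end of the GDC boundary, the set of remaining points is itself a generalized double circle, obtained from the original by merging the two partially traversed chains at the ``cut'' into a single shorter chain (and, once a full chain has been consumed, by dropping the corresponding hull edge altogether). Thus within the suffix we are again in the GDC setting, and the inductive hypothesis supplies a short flip sequence that pulls $w$ to the front. The base operation of each phase---bringing the next designated vertex $w$, which is a convex hull vertex of the residual GDC, to the start of the suffix---relies on Type~1 flips: since $w$ is hull-extreme in the residual point set, an edge from $w$ to the far endpoint of the current suffix is plane, and a bounded number of such flips per step of $w$ toward the front suffice.

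The main obstacle is to verify that removing a boundary-consecutive prefix of $P^*$ leaves a point set still satisfying the flatness hypothesis of a GDC, so that the induction goes through cleanly. A secondary subtlety is to ensure that every flip performed in the suffix is genuinely feasible in the full path---that is, that the candidate replacement edge crosses no chain of the untouched remainder---which is exactly where flatness is again indispensable. If the one-sided peel scheme turns out to be too rigid to maintain a clean GDC invariant, two natural fallbacks are to peel alternately from both ends of $P^*$, or to first establish Conjecture~\ref{conj:startedge} for GDCs directly and then invoke Theorems~\ref{lem:implication_2_1} and~\ref{lem:implication_3_2} with a quantitative bookkeeping of the flip overhead in each reduction.
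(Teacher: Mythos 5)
Your plan is a genuinely different route from the paper's, and unfortunately it has gaps at its core that the paper's strategy is specifically designed to avoid. The paper does not lock a prefix and peel; instead it defines a global potential (the total spine-distance \emph{weight} of the current path plus the count of spine edges) and shows via a case analysis—using Lemma~\ref{app:lem:prel_gdc_1} about which triangles on a spine edge can be non-empty in a GDC, together with Observations~\ref{obs:valid_flips} and \ref{obs:three_neighbors} and Lemma~\ref{app:lem:GDC:neighbors}—that from any non-canonical path one or two flips either increase the number of spine edges without increasing weight, or strictly decrease the weight without losing spine edges. Since the weight is $O(n^2)$ and bounded below, this terminates in $O(n^2)$ flips at a canonical path.

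The central problem with your prefix-locking scheme is that once you demand the first $j+1$ vertices be left undisturbed, the only flips available in the suffix are those that remove some $v_{i-1}v_i$ and add $v_{i-1}v_n$ (reversing a tail); flips that add $v_1v_i$ would destroy your locked prefix. This is precisely the situation of paths with a fixed initial segment of length $\geq 3$, and Figure~\ref{fig:threevertexpath} in the paper shows that the flip graph on such paths can be disconnected already for $7$ points and $k=3$. You would therefore have to argue that the GDC structure rescues you, but that argument is exactly the ``bounded number of such flips per step of $w$ toward the front suffice'' sentence, which is asserted rather than proved and is in fact the entire difficulty. Your appeal to an inductive hypothesis on the residual point set does not close this, because what the induction would need to supply is connectivity of the flip graph on $\mathcal{P}(S', p, q)$ for the residual $S'$ \emph{and} a guarantee that every flip used is feasible inside the full path (no crossing with the locked prefix)—neither of which is established. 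In addition, while ``a sub-chain of a concave chain is concave'' makes the residual-is-a-GDC claim plausible, you correctly flag it as an obstacle and do not discharge it; the merged-chain structure and the identity of the new extreme points after a partial peel would need a careful argument. Finally, the $\sum_j O(n-j) = O(n^2)$ bookkeeping presupposes an $O(n-j)$ per-phase bound that is never derived. The fallback you mention—proving Conjecture~\ref{conj:startedge} for GDCs directly—would indeed suffice via Theorems~\ref{lem:implication_2_1} and~\ref{lem:implication_3_2}, but that is not easier than the theorem itself and is not what the paper does; the paper proves Conjecture~\ref{conj:main} for GDCs directly with the potential-function argument.
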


Finally, we remark that using the order type database~\cite{abcd}, we are able to computationally verify Conjecture~\ref{conj:main} for every set of $n \leq 10$ points 
in general position (even when using only Type 1 flips).\footnote{The source code is available at \url{https://github.com/jogo23/flipping_plane_spanning_paths}.}

\subparagraph*{Notation.} We denote the convex hull of a point set $S$ by $\CH(S)$. All points $p \in S$ on the boundary of $\CH(S)$ are called \emph{extreme points} and the remaining points are called \emph{interior} points.

\section{A Sufficient Condition}\label{sec:main_fixed_edge}

In this section we prove Theorem~\ref{lem:implication_2_1} and Theorem~\ref{lem:implication_3_2}.

\begin{restatable}{lemma}{lemPathexists}\label{lem:pathexists}
Let $S$ be a point set in general position and $p,q\in S$. Then there exists a path $P \in {\cal P}(S)$ which has $p$ and $q$ as its end vertices.
\end{restatable}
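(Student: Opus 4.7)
My plan is to prove the lemma by induction on $n = |S|$. The base case $n = 2$ is immediate: the edge $pq$ is the desired plane spanning path.

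For the inductive step with $n \geq 3$, I first handle the case where at least one of $p, q$, say $p$, is on $\CH(S)$. Since the convex hull has at least three vertices when $n \geq 3$ points are in general position, the vertex $p$ has two convex-hull neighbors. At most one of them coincides with $q$, so I can pick a convex-hull neighbor $r$ of $p$ with $r \neq q$. Applying the inductive hypothesis to $S \setminus \{p\}$ with the two designated endpoints $r$ and $q$ gives a plane spanning path $P'$ on $n - 1$ points. Prepending the edge $pr$ produces a spanning path from $p$ to $q$ on $S$. This new path is plane: $pr$ is an edge of $\CH(S)$, so it lies on the boundary of the convex hull, whereas every edge of $P'$ is contained in $\CH(S \setminus \{p\}) \subseteq \CH(S)$ and uses only vertices of $S \setminus \{p\}$. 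Hence the only possible meeting point between $pr$ and an edge of $P'$ is the shared vertex $r$, which is not a proper crossing.

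For the harder case in which neither $p$ nor $q$ lies on $\CH(S)$, I would pick any extreme point $r \in \CH(S)$ (which is necessarily distinct from $p$ and $q$) and apply the induction hypothesis on $S \setminus \{r\}$ to obtain a plane spanning path $P'$ on $n - 1$ points from $p$ to $q$. It remains to insert $r$ into $P'$ as an interior vertex, replacing some edge $uv$ of $P'$ by the two edges $ur$ and $rv$, to obtain a plane spanning path of $S$ from $p$ to $q$. A natural candidate edge for the splice is found via a visibility argument: since $r$ lies outside $\CH(S \setminus \{r\})$, the two tangent rays from $r$ touch the hull at points $\tau_1, \tau_2$ that are unobstructedly visible from $r$ (any edge of $P'$ crossing the tangent segment would have to leave the hull, contradicting that $P' \subseteq \CH(S \setminus \{r\})$). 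One then analyzes the boundary of the visibility region of $r$ in the arrangement induced by $P'$ to extract a fully visible edge $uv$ of $P'$, both of whose endpoints are visible from $r$; splicing $r$ into $uv$ introduces no new crossings.

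The main obstacle is making the visibility argument rigorous and showing that a suitable insertion edge always exists: although $\tau_1$ and $\tau_2$ are visible from $r$, the particular $P'$-edges incident to them might have their other endpoints occluded by further edges of $P'$. I would handle this by tracing the boundary of the visibility region and arguing that any obstruction corresponds to a crossing that contradicts the planarity of $P'$, eventually certifying a fully visible $P'$-edge. As a back-up strategy if the direct visibility argument proves cumbersome, one can avoid the hard case by a direct construction: choose a line $\ell$ through $p$ and $q$, split $S \setminus \{p, q\}$ into the two open half-planes, apply the easy case inductively to each side (where $p$ and $q$ are automatically on the convex hull), and stitch the two half-paths together through $p$ and $q$.
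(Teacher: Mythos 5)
Your approach is a genuinely different one from the paper's. The paper gives a direct, non-inductive construction: if one of $p,q$ is interior, radially sort the other points around it and walk them in that order (the disjoint cones around an interior point make planarity immediate); if both $p,q$ are extreme, it chooses an auxiliary point $x$ outside $\CH(S)$ at the intersection of (perturbed) tangents through $p$ and $q$, and radially sorts around $x$, so that $p$ and $q$ come first and last. Notice the case split is essentially complementary to yours: the paper's easy case is ``one of $p,q$ interior'', whereas your easy case is ``one of $p,q$ extreme''.

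Your easy case (Case~A) is correct and clean. Since $pr$ is a hull edge of $S$, all of $S\setminus\{p,r\}$ lies strictly on one side of the line through $pr$ (general position), so $\CH(S\setminus\{p\})$ meets that line only at $r$; hence $pr$ can meet an edge of $P'$ only at the shared vertex $r$, and prepending $pr$ preserves planarity.

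The hard case (both $p,q$ interior) has a genuine gap, in both of your suggested routes. For the visibility route: you need that for any plane Hamiltonian path $P'$ on $S\setminus\{r\}$ and an exterior point $r$, there exists an edge $uv$ of $P'$ with \emph{both} segments $ru$ and $rv$ disjoint from all other edges of $P'$. You correctly identify this as the crux and only sketch it; tangency gives you a visible \emph{vertex} of $P'$, but promoting that to an edge with both endpoints visible (so that the splice introduces no crossing) requires a real argument, and nothing in your sketch forces the edges incident to the tangent vertex to have their other endpoints unobstructed. For the back-up route, the stitching does not work as stated: if you apply the lemma to $T_1\cup\{p,q\}$ and to $T_2\cup\{p,q\}$ with the same endpoints $p$ and $q$, the union of the two resulting paths has $p$ and $q$ each of degree $2$, so you obtain a spanning \emph{cycle}, not a spanning path with endpoints $p$ and $q$; deleting an edge of that cycle destroys the prescribed endpoints. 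So the hard case remains open in your write-up. The paper avoids this entirely, since ``both $p,q$ interior'' falls under its easy radial-sweep case.
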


\begin{proof}
We consider two different cases. Let us first assume that at least one of the two points, w.l.o.g.\ say $p$, is not an extreme point of $S$. Sort all other points of $S$ radially around $p$, starting at $q$. Connect $p$ to the second point in this order (the point radially just after $q$) and connect all other points in radial order to a path, such that $q$ becomes the last point of this path; see Figure~\ref{fig:twovertexpath} (left). Since $p$ is an interior point, each edge of this path stays in a cone defined by $p$ and two successive points in the radial order, in particular all these cones are disjoint. Hence, we have obtained a plane spanning path starting at $p$ and ending at $q$.

Assume now for the second case that both given points lie on the boundary of the convex hull of $S$. Then consider two tangents of the convex hull of $S$ going through $p$ and $q$, respectively. By the general position assumption of $S$ we can always perturb these two tangents so that they still go through the two given vertices but are not parallel, and thus cross in a point $x$ outside of the convex hull of $S$; see Figure~\ref{fig:twovertexpath} (right). Sort all points radially around $x$ and connect the points in this order to a path. By construction the points $p$ and $q$ are the first and last point in this sorting, and we thus obtain the required path.\qed
\end{proof}

\begin{figure}[t]
	\centering
	{\includegraphics[page=1,scale=0.55]{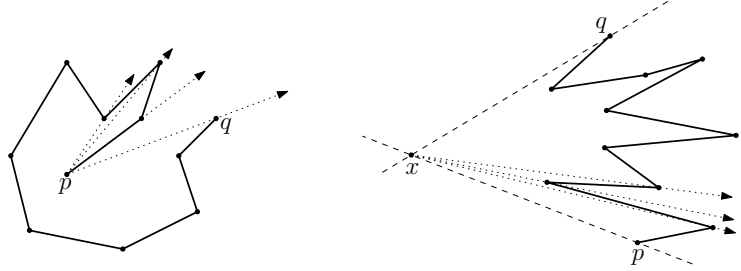}}
\caption{\label{fig:twovertexpath}
	For any two given points $p$ and $q$ there exists a plane spanning path having these two points as start and target points. Left: the case if at least one of the two points is in the interior of the point set. Right: the case when both points are on the boundary of the convex hull.}
\end{figure}

\lemImplicationTwoOne*

\begin{proof}
Let $S$ be a point set and $P_s, P_t \in \PS$. If $P_s$ and $P_t$ have a common endpoint, we can directly apply Conjecture~\ref{conj:startpoint} and the statement follows. So assume that $P_s$ has the endpoints $v_a$ and $v_b$, and $P_t$ has the endpoints $v_c$ and $v_d$, which are all distinct. By Lemma~\ref{lem:pathexists} there exists a path $P_m$ having the two endpoints $v_a$ and~$v_c$.  By Conjecture~\ref{conj:startpoint} there is a flip sequence from $P_s$ to $P_m$ with the common endpoint $v_a$, and again by Conjecture~\ref{conj:startpoint} there is a further flip sequence from $P_m$ to $P_t$ with the common endpoint $v_c$. 
This concludes the proof.\qed%
\end{proof}

Towards Theorem~\ref{lem:implication_3_2}, we first have a closer look at what edges form \emph{viable} starting edges. For a given point set $S$ and points $p,q \in S$, we say that $pq$ forms a \emph{viable} starting edge if there exists a path $P \in \PS$ that starts with $pq$. For instance, an edge connecting two extreme points that are not consecutive along $\CH(S)$ is not a viable starting edge. The following lemma shows that these are the only non-viable starting edges.

\begin{restatable}{lemma}{lemViableStartEdge}\label{lem:viable_start_edge}
Let $S$ be a point set in general position and $u,v \in S$. The edge $uv$ is a viable starting edge if and only if one of the following is fulfilled:
(i) $u$ or $v$ lie in the interior of $\CH(S)$, or (ii) $u$ and $v$ are consecutive along $\CH(S)$.
\end{restatable}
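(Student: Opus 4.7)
The plan is to establish the two directions of the equivalence separately. For the forward direction ($\Rightarrow$), I would argue the contrapositive: assume that both $u$ and $v$ are extreme points of $\CH(S)$ and are not consecutive on $\partial\CH(S)$. Then the chord $uv$ is a diagonal of the convex polygon $\CH(S)$, so the line $\ell$ through $u$ and $v$ splits $S\setminus\{u,v\}$ into two non-empty sets $A$ and $B$, one on each arc of $\partial\CH(S)$. If some plane spanning path $P=u,v,x_1,\ldots,x_{n-2}$ started with $uv$, then the suffix $v,x_1,\ldots,x_{n-2}$ would have to visit every point of $A\cup B$, so some edge $x_ix_{i+1}$ would connect a point of $A$ to a point of $B$ and therefore cross $\ell$. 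The key observation is that since $u,v\in\partial\CH(S)$ are extreme and non-consecutive, the direction along $\ell$ from $u$ away from $v$ lies strictly outside the interior angle of $\CH(S)$ at $u$ (and symmetrically at $v$), so $\ell\cap\CH(S)$ is exactly the segment $uv$. Since $x_i,x_{i+1}\in\CH(S)$, the crossing of $x_ix_{i+1}$ with $\ell$ lies on $uv$ itself, contradicting the planarity of $P$.

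For the backward direction ($\Leftarrow$), I plan to give an explicit construction in each of the two cases, in the spirit of Lemma~\ref{lem:pathexists}. In case~(i), I may assume without loss of generality that $u$ lies strictly inside $\CH(S)$, and sort the points of $S\setminus\{u\}$ radially around $u$, choosing the starting direction so that $v$ comes first: $v=w_1,w_2,\ldots,w_{n-1}$. The candidate path is $u,v,w_2,\ldots,w_{n-1}$. Using the cone argument from Lemma~\ref{lem:pathexists}, each edge $w_iw_{i+1}$ lies inside the angular cone with apex $u$ bounded by the rays from $u$ through $w_i$ and $w_{i+1}$; these cones have pairwise disjoint interiors, so no two such edges cross. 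Moreover, $uv$ lies along the boundary ray through $w_1$, meeting the first cone only at the shared endpoint $w_1$ and avoiding all others, so $uv$ crosses no edge of the suffix either.

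In case~(ii), where $u$ and $v$ are consecutive on $\partial\CH(S)$, the set $S\setminus\{u,v\}$ lies strictly inside one of the two open halfplanes bounded by the line through $u$ and $v$. I would sort $S\setminus\{u,v\}$ radially around $v$ starting from the ray from $v$ through $u$, obtaining a sequence $x_1,\ldots,x_{n-2}$, and take the candidate path $u,v,x_1,\ldots,x_{n-2}$. The same cone argument centred at $v$ shows that the edges of the suffix are pairwise non-crossing, and the initial edge $uv$ lies along the reference ray from $v$ through $u$, on the boundary of the first cone and in the opposite halfplane from all suffix edges, so $uv$ crosses none of them. The step I expect to require the most care is the forward direction: it is crucial there that both endpoints are extreme, because only under that hypothesis does $\ell\cap\CH(S)$ reduce to the segment $uv$ and force an $A$-to-$B$ transition edge to hit $uv$ itself rather than merely the extension of $\ell$. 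The sufficiency constructions are direct adaptations of Lemma~\ref{lem:pathexists}, though one should also separately dispose of the degenerate case $n=2$, where (ii) trivially holds and $uv$ is the only path.
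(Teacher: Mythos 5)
Your forward direction is correct and actually spells out details that the paper leaves implicit (the paper only writes ``there exist points to the left and to the right of $uv$ \dots any plane path starting with $uv$ can only reach the points either to the left or the right''). Your argument that $\ell\cap\CH(S)$ collapses to the segment $uv$ precisely because $u$ and $v$ are non-consecutive extreme points is the right way to make that rigorous, and general position rules out a crossing at an endpoint. Your case~(ii) construction is also correct; the paper sorts around $u$ rather than $v$, but the cone argument is the same.

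The gap is in case~(i): the ``without loss of generality, $u$ lies strictly inside $\CH(S)$'' is not a valid reduction. As the notion is used throughout Section~2 (in particular in Lemma~\ref{lem:cycle} and in the proof of Theorem~\ref{lem:implication_3_2}), ``$uv$ is a viable starting edge'' means there exists a plane spanning path whose \emph{first} vertex is $u$ and second vertex is $v$; this is not a priori symmetric in $u$ and $v$ (indeed, the symmetry of the condition is part of what Lemma~\ref{lem:viable_start_edge} asserts, so invoking it here would be circular). Consequently the subcase ``$u$ extreme, $v$ interior'' must be treated separately, and your radial construction around $u$ fails there: with $u$ extreme, the radial order of $S\setminus\{u\}$ around $u$ is a \emph{linear} order whose first element is necessarily a hull-neighbour of $u$, so $v$ (being interior) cannot be made the first element. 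The paper resolves this with a dedicated construction that stitches together three subpaths $P_1,P_2,P_3$ (two radial pieces around $u$ restricted to interior points plus the two hull-neighbours of $u$, and one hull arc). A lighter fix in your style would be to radially sort $S\setminus\{v\}$ around the \emph{interior} point $v$, break the circular order so that $u$ comes first, say $u=z_0,z_1,\dots,z_{n-2}$, and take the path $u,v,z_1,\dots,z_{n-2}$: the edge $uv$ lies on the unused boundary ray $vz_0$, the edge $vz_1$ is a ray, and the remaining edges sit in pairwise disjoint cones $C_1,\dots,C_{n-3}$ around $v$, none of which contains the ray $vz_0$. Either way, the missing subcase needs an explicit argument.
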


\begin{proof}
If $u$ and $v$ are extreme points that are not consecutive along $\CH(S)$, then there exist points to the left and to the right of the edge $uv$. However, any plane path starting with $uv$ can only reach the points either to the left or the right.

If $u$ is in the interior of $\CH(S)$, we sort the remaining points in radial order around $u$. We construct a path starting with $uv$ that visits the remaining points consecutively in this radial order. We proceed exactly the same way, if $u$ and $v$ are consecutive extreme points.

If $u$ is an extreme point and $v$ an interior point, let $v_\ell$ and $v_r$ be the two neighbors of $u$ along the convex hull. Let $S' \subset S$ be the set of points in the interior of $\CH(S)$ plus $\{v_\ell, v_r\}$. Again we sort the points of $S'$ in radial order around $u$. Let $v^- \in S'$ be the vertex that is right before $v$ in this radial order.

We construct three paths $P_1 = v_\ell, \ldots, v^-$, $P_2 = v_r, \ldots, v$, and $P_3 = v_\ell, \ldots, v_r$, where $P_1$ and $P_2$ simply connect the points in $S'$ in radial order (between its corresponding endpoints). Note that $P_1$ may have length zero. $P_3$ connects $v_\ell$ to $v_r$ along the boundary of the convex hull of $S$ (along the side not containing $v_1$). Then the union of the three paths $P_1, P_2, P_3$ together with $uv$ 
forms the desired plane spanning path (see Figure~\ref{fig:hamcycle2} (right)).\qed
\end{proof}

\begin{figure}[t]
	\centering
	{\includegraphics[page=4,scale=0.65]{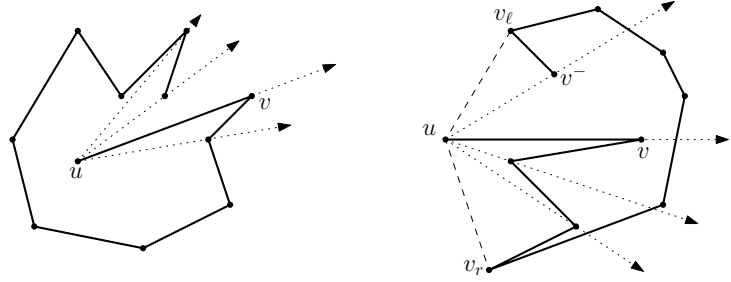}}
\caption{\label{fig:hamcycle2}
	\emph{Left:} $u$ lies inside the convex hull of $S$. \emph{Right:} $u$ is on the boundary of the convex hull. For paths $P_1$ and $P_2$: only vertices in the radial order around $u$ which are not on the boundary of 
the convex hull of $S$, are considered.}
\end{figure}

The following lemma is the analogue of Lemma~\ref{lem:pathexists}:

\begin{restatable}{lemma}{lemCycle}\label{lem:cycle}
Let $S$ be a point set in general position and $v_1 \in S$. Further let $S' \subset S$ be the set of all points $p \in S$ such that $v_1p$ forms a viable starting edge. Then for two points $q,r \in S'$ that are consecutive in the circular order around~$v_1$, there exists a plane spanning cycle containing the edges $v_1q$ and $v_1r$.
\end{restatable}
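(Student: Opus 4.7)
The plan is to mirror the case distinction of Lemma~\ref{lem:viable_start_edge} and, in each case, exhibit an explicit plane Hamiltonian cycle built from a radial fan around $v_1$ (together with, when $v_1$ is on the convex hull, a stretch of the hull boundary that sweeps up the extreme points missing from $S'$).

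Suppose first that $v_1$ is an interior point of $\CH(S)$. Then Lemma~\ref{lem:viable_start_edge} gives $S' = S\setminus\{v_1\}$, so the circular radial order of $S'$ around $v_1$ is a full cyclic sequence; since $q$ and $r$ are consecutive in this order, I list the remaining points starting at $q$ and ending at $r$, obtaining $q=p_1,p_2,\ldots,p_{n-1}=r$, and form the cycle $v_1,p_1,p_2,\ldots,p_{n-1},v_1$. Planarity is the standard fan argument: each polygonal edge $p_ip_{i+1}$ lies in the open wedge at $v_1$ bounded by the rays through $p_i$ and $p_{i+1}$, and these wedges, together with the two bounding edges $v_1q$ and $v_1r$, are pairwise openly disjoint.

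Now suppose $v_1$ is extreme, with hull neighbors $v_\ell$ and $v_r$. Then $S'$ consists of the interior points of $S$ together with $\{v_\ell, v_r\}$, and its radial order around $v_1$ is a linear sequence $s_1=v_\ell,s_2,\ldots,s_m=v_r$, so $\{q,r\}$ is either a linearly-consecutive pair $\{s_j,s_{j+1}\}$ or the wrap-around pair $\{v_\ell,v_r\}$. For the linearly-consecutive case I combine two half-fans with a piece of the convex hull: start at $v_1$, go to $q=s_j$, sweep the ``left'' half-fan $s_j,s_{j-1},\ldots,s_1=v_\ell$, trace the convex hull of $S$ from $v_\ell$ to $v_r$ on the side avoiding $v_1$ (picking up all extreme points not in $S'$), sweep the ``right'' half-fan $v_r=s_m,s_{m-1},\ldots,s_{j+1}=r$, and close back to $v_1$. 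The left and right half-fans live in disjoint unions of wedges at $v_1$, lying on opposite sides of the wedge bounded by the rays through $q$ and $r$, so within each half and between halves no two fan edges cross, and the bounding edges $v_1q,v_1r$ cross none of them either. No fan edge crosses the hull arc, because each fan edge is a chord of $\CH(S)$ whose only possible boundary contact is at $v_\ell$, $v_r$, or $v_1$. For the wrap-around pair $\{v_\ell,v_r\}$ I abandon the fan and invoke Lemma~\ref{lem:pathexists} on $S\setminus\{v_1\}$ to obtain a plane Hamiltonian path $P$ from $v_\ell$ to $v_r$; closing $P$ with the two hull edges $v_1v_\ell,v_1v_r$ yields a plane cycle, since $v_1v_\ell$ and $v_1v_r$ are edges of $\CH(S)$ and hence cannot be crossed by any segment connecting two other points of $S$.

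The main obstacle I anticipate is that, in the extreme case, the radial order of $S'$ around $v_1$ differs from that of $S$ around $v_1$: hull vertices in $\partial\CH(S)\setminus\{v_1,v_\ell,v_r\}$ are interleaved with interior points in the $S$-order but absent from the $S'$-order, so edges such as $s_is_{i+1}$ cross wedges that contain such hull vertices, and the naive ``consecutive in the full radial order'' fan argument does not apply directly. The resolution is precisely that those missing hull vertices are exactly the ones collected by the convex-hull arc from $v_\ell$ to $v_r$, and the chord-versus-boundary separation still guarantees that no fan chord crosses a hull edge; this is why the construction requires the detour along $\CH(S)$ rather than a single radial sweep.
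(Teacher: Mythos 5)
Your proposal is correct and follows essentially the same route as the paper: a case split on whether $v_1$ is interior or extreme, a radial fan cycle in the interior case, and in the extreme case the same ``two half-fans plus a hull arc from $v_\ell$ to $v_r$'' construction that the paper describes as the union of $P_1,P_2,P_3$ with $v_1q$ and $v_1r$. The only small departure is the wrap-around subcase $\{q,r\}=\{v_\ell,v_r\}$, where the paper closes a full radial fan over $S\setminus\{v_1\}$ while you instead invoke Lemma~\ref{lem:pathexists} on $S\setminus\{v_1\}$ and cap the path with the two hull edges at $v_1$; both are valid, and yours has the minor advantage of not needing to re-justify the fan for a second time.
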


\begin{proof}\label{lem:cycle:proof}
The construction of these plane spanning cycles is completely analogous to the construction of the paths in the proof of Lemma~\ref{lem:viable_start_edge} but we add the proof for the sake of completeness.

First assume that $v_1$ is an interior point. Then, by Lemma~\ref{lem:viable_start_edge}, $S' = S \setminus \{v_1\}$ holds. We construct a plane spanning path starting with $v_1q$ and visiting the remaining points in circular order around $v_1$ such that $r$ is the last in this order. Lastly, connect $r$ to $v_1$. 

Now, let $v_1$ be an extreme point. Again we proceed analogously if $q$ and $r$ are the two neighbors of $v_1$ along $\CH(S)$.
Otherwise, by Lemma~\ref{lem:viable_start_edge}, at least one of the vertices $q$ or $r$ is an interior point. Then we construct the same three paths $P_1, P_2, P_3$ as in Lemma~\ref{lem:viable_start_edge} (replacing the roles of $v, v^-$ by $q$ and $r$). Then the union of $P_1, P_2, P_3$ together with $v_1q$ and $v_1r$ forms the desired cycle (see Figure~\ref{fig:hamcycle} (right)).\qed
\end{proof}

\begin{figure}[t]
	\centering
	{\includegraphics[page=3,scale=0.65]{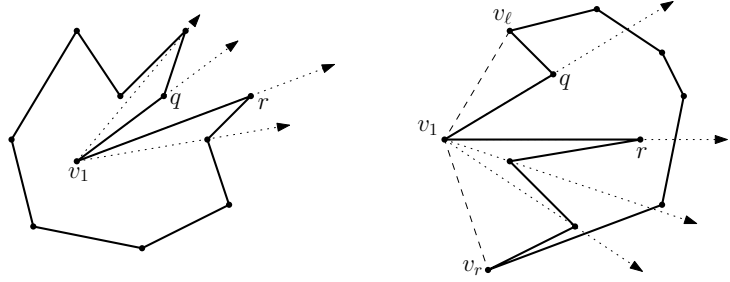}}
\caption{\label{fig:hamcycle}
	There exists a plane spanning cycle of $S$ such that $v_1$ is connected to two points neighbored in the radial order around $v_1$. \emph{Left:} $v_1$ is an interior point. \emph{Right:} $v_1$ is an extreme point.
}
\end{figure}

We are now ready to prove Theorem~\ref{lem:implication_3_2}:

\lemImplicationThreeTwo*

\begin{proof}
Let $S$ be a point set and $v_1 \in S$. Further let $P, P' \in \mathcal{P}(S,v_1)$.
If $P$ and $P'$ have the starting edge in common, then we directly apply Conjecture~\ref{conj:startedge} and are done. So let us assume that the starting edge of $P$ is $v_1v_2$ and the starting edge of $P'$ is $v_1v_2'$.
Clearly $v_2, v_2' \in S'$ holds. Sort the points in $S'$ in radial order around $v_1$. Further let $v_x \in S'$ be the next vertex after $v_2$ in this radial order and $C$ be the plane spanning cycle with edges $v_1v_2$ and $v_1v_x$, as guaranteed by Lemma~\ref{lem:cycle}.

By Conjecture~\ref{conj:startedge}, we can flip $P$ to $C \setminus v_1v_x$. Then, flipping $v_1v_2$ to $v_1v_x$ we get to the path $C \setminus v_1v_2$, which now has $v_1v_x$ as starting edge. We iteratively continue this process of \enquote{rotating} the starting edge until reaching $v_1v_2'$.\qed
\end{proof}

Theorems~\ref{lem:implication_2_1} and \ref{lem:implication_3_2} imply that it suffices to show connectedness of certain subgraphs of the flip graph. A priori it is not clear whether this is an easier or a more difficult task -- on the one hand we have smaller graphs, making it easier to handle. On the other hand, we may be more restricted concerning which flips we can perform, or exclude certain \enquote{nice} paths.

\section{Flip Connectivity for Wheel Sets}\label{sec:wheel}

Akl et al.~\cite{AKL2007} proved connectedness of the flip graph if the underlying point set~$S$ is in convex position. They showed that every path in $\PS$ can be flipped to a \emph{canonical path} that uses only edges on the convex hull of $S$. To generalize this approach to other classes of point sets, we need two ingredients: (i) a set of \emph{canonical paths} that serve as the target of the flip operations and that have the property that any canonical path can be transformed into any other canonical path by a simple sequence of flips, usually of constant length; and (ii) a strategy to flip any given path to some canonical path. 

Recall that a set $S$ of $n \geq 4$ points
in the plane is a \emph{wheel set} if there is exactly one interior point $c_0 \in S$. 
We call $c_0$ the \emph{center} of $S$ 
and classify the edges on $S$ as follows:
an edge incident to the center $c_0$ is called a \emph{radial} edge,
and an edge along $\CH(S)$ is called \emph{spine} edge 
(the set of spine edges forms the \emph{spine}, which is just the boundary of the convex hull here). All other edges are called \emph{inner} edges.
The \emph{canonical paths} are those that consist only of spine edges and one or two radial edges. 

We need two observations. Let $S$ be a point set and $P = v_1, \ldots, v_n \in \PS$. 
Further, let $v_i$ $(i \geq 3)$ be a vertex such that no edge on $S$ crosses  $v_1v_i$. 
We denote the face bounded by $v_1, \ldots, v_i, v_1$ by $\Phi(v_i)$.

\begin{observation}\label{obs:face}
Let $S$ be a point set, $P = v_1, \dots, v_n \in \PS$, and $v_i$ ($i~\geq~3$) be a vertex such that no edge on $S$ crosses $v_1v_i$. 
Then all vertices after $v_i$ (i.e.,~$\{v_{i+1}, \ldots, v_n\}$) must entirely be contained in either the interior or the exterior of $\Phi(v_i)$.
\end{observation}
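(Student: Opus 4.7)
The plan is to use a Jordan-curve argument. First I would verify that the boundary of $\Phi(v_i)$, namely the closed polygon with edges $v_1 v_2, v_2 v_3, \ldots, v_{i-1} v_i$ together with $v_1 v_i$, really is a simple closed curve. The prefix edges are pairwise non-crossing because $P \in \PS$ is plane, and the closing edge $v_1 v_i$ crosses no edge on $S$ by hypothesis -- in particular it crosses none of the prefix edges. General position rules out degenerate incidences. Hence $\partial \Phi(v_i)$ is a Jordan curve and partitions the plane into a bounded interior and an unbounded exterior.

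Next I would look at the suffix $Q = v_i, v_{i+1}, \ldots, v_n$ of $P$. Its edges are edges on $S$, so none of them crosses $v_1 v_i$ by the assumption of the observation; and none of them crosses any prefix edge $v_{j-1} v_j$ with $j \le i$ because $P$ itself is plane. Thus no edge of $Q$ crosses $\partial \Phi(v_i)$. Moreover, by general position no vertex $v_j$ with $j > i$ can lie on any edge of $\partial \Phi(v_i)$, and the only vertex that $Q$ shares with the prefix is $v_i$. So the relative interior of every edge of $Q$ avoids $\partial \Phi(v_i)$, and every vertex $v_j$ with $j > i$ lies strictly in the interior or strictly in the exterior of $\Phi(v_i)$.

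Finally I would argue by contradiction: suppose there exist indices $i < j < k$ (or similarly with the roles swapped) with $v_j$ inside $\Phi(v_i)$ and $v_k$ outside. Then the polygonal subarc of $Q$ from $v_j$ to $v_k$ is a continuous curve from the interior of the Jordan curve $\partial \Phi(v_i)$ to its exterior, which must therefore meet $\partial \Phi(v_i)$ in a point other than $v_i$, contradicting the conclusion of the previous paragraph. Hence all of $v_{i+1}, \ldots, v_n$ lie on the same side of $\partial \Phi(v_i)$, which is the claim.

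The only mildly delicate step is the first one: carefully justifying that $\partial \Phi(v_i)$ really is a simple closed curve and hence separates the plane, so that the Jordan-curve argument in the last paragraph applies cleanly. Once the boundary is known to be simple, the rest is a direct transfer of the planarity and no-crossing hypotheses, so I expect no further obstacles.
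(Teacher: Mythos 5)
Your proposal is correct. The paper states this result as an \emph{Observation} and supplies no explicit proof, so there is no paper argument to compare against; but the Jordan-curve reasoning you give is precisely the argument the authors implicitly rely on. You correctly identify the three facts that make it go through: (1) the prefix $v_1, \ldots, v_i$ together with the chord $v_1 v_i$ bounds a simple polygon, because the prefix edges do not cross one another (planarity of $P$), the chord $v_1 v_i$ crosses no edge on $S$ by hypothesis, and general position excludes incidences of a vertex with the relative interior of an edge; (2) the suffix $v_i, \ldots, v_n$ has no edge crossing $\partial \Phi(v_i)$, for the same two reasons, and no suffix vertex $v_j$ with $j > i$ lies on $\partial \Phi(v_i)$; (3) a polygonal arc contained in $Q$ joining a point strictly inside to a point strictly outside would have to meet $\partial \Phi(v_i)$, which step (2) forbids. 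Your remark that the contradiction arises from a crossing ``at a point other than $v_i$'' is slightly superfluous, since the subarc of $Q$ between $v_j$ and $v_k$ (with $j, k > i$) does not pass through $v_i$ at all, so it must avoid $\partial \Phi(v_i)$ entirely; but this does not affect correctness. The proof is sound and complete.
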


\begin{observation}\label{app:obs:diagonal}
Let $S$ be a wheel set and
let $P =  v_1,\dots,v_n \in \mathcal{P}(S)$.
Suppose that the edge $v_iv_{i+1}$ of $P$ is an inner edge. Then 
the sets $\{v_1,\dots, v_{i-1}\}$ and $\{v_{i+2},\dots, v_n\}$ lie 
on different sides of the line spanned by $v_iv_{i+1}$. 
\end{observation}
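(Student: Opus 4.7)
The plan is to exploit the planarity of $P$ together with the geometric fact that the line $\ell$ spanned by $v_iv_{i+1}$ meets the convex hull $\CH(S)$ exactly in the segment $v_iv_{i+1}$. Since $v_iv_{i+1}$ is an inner edge, it is neither radial nor spine, so both endpoints are extreme points of $S$ which are non-adjacent along $\CH(S)$. Consequently, the chord $v_iv_{i+1}$ splits the interior of $\CH(S)$ into two regions, each of which contains at least one extreme point of $S$ distinct from $v_i$ and $v_{i+1}$. By general position no other point of $S$ lies on $\ell$, and because $v_i,v_{i+1}$ are hull vertices, the line $\ell$ leaves $\CH(S)$ immediately on each side of the chord; restricted to $\CH(S)$, $\ell$ is thus precisely the segment $v_iv_{i+1}$.

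Next I would argue that each of the two subpaths $v_1,\dots,v_{i-1}$ and $v_{i+2},\dots,v_n$ is contained in a single open half-plane bounded by $\ell$. Every edge of $P$ is a segment between two points of $S$ and therefore lies inside $\CH(S)$, so if some edge of one of these subpaths crossed $\ell$, the crossing point would lie in $\CH(S)\cap\ell$, i.e.\ in the relative interior of the segment $v_iv_{i+1}$, contradicting the planarity of $P$. Since no vertex of $P$ other than $v_i,v_{i+1}$ lies on $\ell$ and both subpaths are connected, each of them lies strictly on one side of $\ell$.

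Finally I would show that the two subpaths lie on \emph{different} sides of $\ell$. Since $P$ visits every point of $S$, the (non-empty) sets of hull vertices in each of the two open half-planes must be covered by $\{v_1,\dots,v_{i-1}\}\cup\{v_{i+2},\dots,v_n\}$. If both subpaths were on the same side, the opposite side would contain no vertex of $P$ at all, contradicting the existence of a hull vertex there.

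The main obstacle is the middle step: transferring the planarity of $P$ into a statement about how the two subpaths sit with respect to $\ell$. The key reduction is that any crossing of $\ell$ by a path edge must in fact be a crossing of the edge $v_iv_{i+1}$ itself, which only works because the convex hull clips $\ell$ down to exactly this segment. Once that is in place, the remaining work is a short counting argument based on the fact that both open half-planes contain at least one extreme point of $S$ which the spanning path must visit.
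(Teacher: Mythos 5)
Your proof is correct. The paper states this as an observation and gives no proof, but your argument is exactly the intended one: an inner edge of a wheel set joins two non-adjacent hull vertices, so its supporting line $\ell$ meets $\CH(S)$ precisely in the segment $v_iv_{i+1}$ (by general position); planarity of $P$ then confines each of the two connected subpaths $v_1,\dots,v_{i-1}$ and $v_{i+2},\dots,v_n$ to a single open half-plane of $\ell$, and since each open half-plane contains at least one hull vertex that the spanning path must visit, the two subpaths cannot lie on the same side. The only nitpick is cosmetic: $\CH(S)\cap\ell$ is the closed segment, and you need the extra (general-position) step you already invoke to move the hypothetical crossing point into its relative interior.
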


\thmWheel*

\begin{proof}
Let $P = v_1, \dots, v_n \in \PS$ be a non-canonical path. W.l.o.g., we can assume $v_1 \neq c_0$ (at least one of the two endpoints of $P$ is not the center).
We show how to apply suitable flips to increase the number of spine edges of $P$.

By Lemma~\ref{lem:viable_start_edge}, the edge $v_1v_2$ is 
either radial or a spine edge. We distinguish the two cases:

\begin{description}
\item[Case 1\label{wheel_case_1}] $v_1v_2$ is radial, i.e., (since we assumed
 $v_1 \neq c_0$), we have $v_2 = c_0$.

Then $v_2v_3$ is radial and, in analogy to Lemma~\ref{lem:viable_start_edge}, the remaining path can only visit vertices on one side of $v_1v_2v_3$ (see Figure~\ref{app:fig:wheel} (left)). Hence, $v_3$ must be a neighbor of $v_1$ along the convex hull.

Thus 
we can increase the number of spine edges by flipping the radial edge 
$v_2v_3$ to the spine edge $v_1v_3$.

\item[Case 2\label{wheel_case_2}] $v_1v_2$ is a spine edge.

Let $v_\mya$ ($\mya \neq 2$) be a neighbor of $v_1$ (along the convex hull). Note first that we can assume $v_{\mya-1}v_\mya$ to be a spine edge, since otherwise we can increase the number of spine edges by flipping $v_{\mya-1}v_\mya$ to $v_1v_\mya$. Furthermore, we can assume $a \neq n$, since otherwise we can insert $v_1v_n$ and remove an arbitrary (non-spine) edge.

By Observation~\ref{app:obs:diagonal}, $P$ cannot have any inner edge $e$ 
before $v_\mya$, since otherwise $e$ would have the neighbors $v_1$ and $v_\mya$ on 
the same side. 
On the other hand, since $v_1v_2$ and $v_{\mya-1}v_\mya$ are spine edges, $v_{\mya+1}$ must be in the interior of the face $\Phi(v_\mya)$. 
Then, by Observation~\ref{obs:face}, all extreme points must already be covered before $v_{\mya+1}$ (see Figure~\ref{app:fig:wheel} (right)).
This, however, implies that $P$ does not contain an inner edge.

Hence, the case that $v_1v_2$ (and $v_{\mya-1}v_\mya$) are spine edges cannot occur, since we assumed $P$ 
to be non-canonical, i.e., containing an inner edge.
\end{description}

\begin{figure}[t]
	\centering
	{\includegraphics[page=1,scale=0.75]{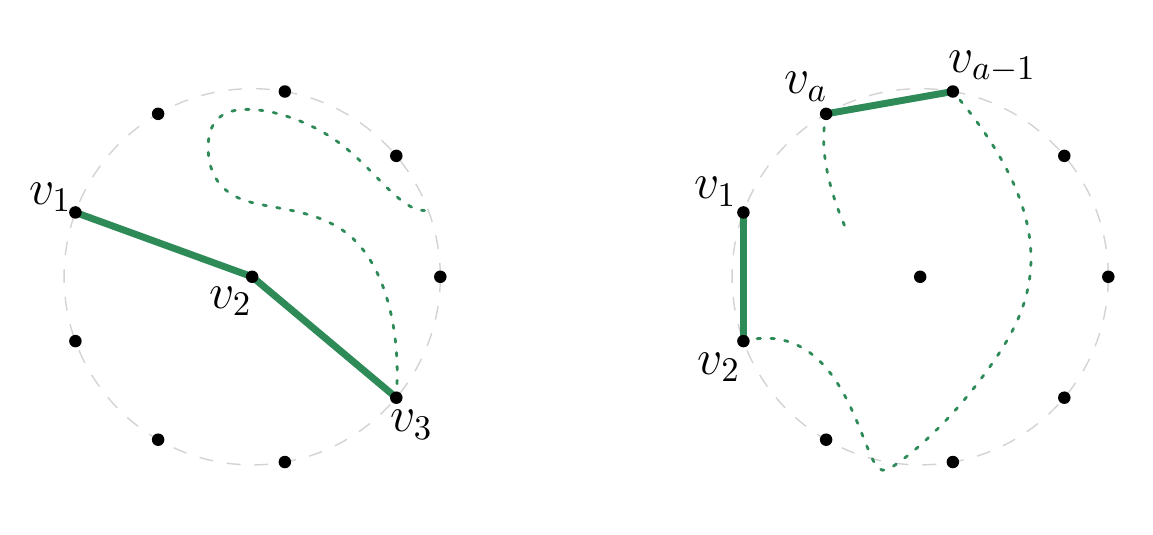}}
\caption{\label{app:fig:wheel}
	\emph{Left:} Illustration of \ref{wheel_case_1}: if $v_1$ and $v_3$ are not neighbors, there are vertices that cannot be reached. \emph{Right:} Illustration of \ref{wheel_case_2}: if $v_{\mya-1}v_\mya$ is a spine edge, the path after $v_\mya$ cannot visit any extreme point anymore.}
\end{figure}

Iteratively applying above procedure, we obtain a canonical path. Any two such canonical paths can be transformed into each other by at most 4 flips.

By Observation~\ref{app:obs:diagonal} and the fact that not all edges can be radial, we know that any plane spanning path contains at least one spine edge and hence, we need at most $n-3$ iterations. In total, we need 
at most $2(n - 3) + 4 = 2n - 2$ flips. However, if we count a 
little more carefully, we see that in the case of $n - 3$ iterations, i.e., 
only one radial edge remains and not two, we 
actually need only two flips instead of 4 
for the intermediate step. Hence, we save two steps.\qed
\end{proof}

\section{Flip Connectivity for Generalized Double Circles}\label{sec:GDC}

The proof for generalized double circles is in principle similar to the one for wheel sets but much more involved. 
For a point set $S$ and two extreme points $p,q \in S$, we call a subset $\CC(p,q) \subset S$ \emph{concave chain} (chain for short) for $S$, if (i) $p,q \in \CC(p,q)$; (ii) $\CC(p,q)$ is in convex position; (iii) $\CC(p,q)$ contains no other extreme points of $S$; and 
(iv) every line $\ell_{xy}$ through any two points $x,y \in \CC(p,q)$ has the property that all points of $S \setminus \CC(p,q)$ are contained in the open halfplane bounded by $\ell_{xy}$ that contains neither $p$ nor $q$. 
Note that the extreme points $p$ and $q$ must necessarily be consecutive along $\CH(S)$. 
If there is no danger of confusion, we also refer to the spanning path from $p$ to $q$ along the convex hull of $\CC(p,q)$ as the concave chain.

\begin{figure}[t]
	\centering
	{\includegraphics[page=2,scale=0.62]{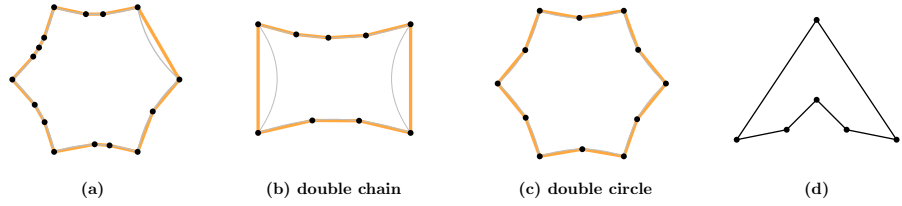}}
\caption{\label{fig:gdc2}
	(a-c) Examples of generalized double circles (the uncrossed spanning cycle is depicted in orange). 
	(d) A point set admitting an uncrossed spanning cycle that is \emph{not} a generalized double circle.
	}
\end{figure}

A point set $S$ is in \emph{generalized double circle} position if there exists a family of concave chains such that every inner point of $S$ is contained in exactly one chain and every extreme point of $S$ is contained in exactly two chains.
We denote the class of generalized double circles by $\GDC$ (see Figure~\ref{fig:gdc2} for an illustration). For $S \in \GDC$, it is
 not hard to see that the union of the concave chains forms an uncrossed spanning cycle: 

\begin{lemma}\label{lem:gdc_fulfills_p1}
Every point set $S \in \GDC$ admits an uncrossed spanning cycle formed by the union of the concave chains.
\end{lemma}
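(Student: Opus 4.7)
I would verify three properties of the graph $\Gamma$ on vertex set $S$ whose edges are the union of the concave chain paths: (a) every vertex of $S$ has degree $2$ in $\Gamma$; (b) $\Gamma$ is connected; and (c) no two edges of $\Gamma$ cross. Together, these say that $\Gamma$ is a plane spanning cycle.

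For (a), a direct degree count works. Every interior point of $S$ sits in exactly one chain, as an interior vertex of that chain path (degree $2$), and every extreme point sits in exactly two chains, as an endpoint of each (degree $2$). Hence $\Gamma$ is $2$-regular, i.e., a disjoint union of cycles. For (b), I would first observe that the extreme endpoints $p, q$ of any chain $\CC(p,q)$ are consecutive along $\CH(S)$: applying condition (iv) to the line $\ell_{pq}$ shows that otherwise $S \setminus \CC(p,q)$ would contain extreme points on both sides of $\ell_{pq}$, contradicting that all of $S \setminus \CC(p,q)$ must lie in a single open halfplane. Consequently, the chains are in bijection with the edges of $\CH(S)$, and walking from one extreme point to the next through consecutive chains traces out a single Hamiltonian cycle of $\Gamma$.

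For (c), two edges inside the same chain do not cross because the chain path follows the boundary of $\CH(\CC)$ for a set in convex position. For edges $e_1 = uv$ of $\CC_1$ and $e_2 = u'v'$ of a distinct chain $\CC_2$, note that $\CC_1 \cap \CC_2$ contains at most one (necessarily extreme) point $p$, since interior points lie in a unique chain and distinct chains correspond to distinct hull edges. If $p \notin \{u', v'\}$, then $u', v' \in S \setminus \CC_1$, so condition (iv) for $\CC_1$ applied to $\ell_{uv}$ places $u'$ and $v'$ in a common open halfplane and $e_2$ cannot cross $e_1$. Otherwise, up to relabelling, $u' = p$; in this subcase I would instead apply condition (iv) for $\CC_2$ to the line $\ell_{p v'}$: since $u, v \neq p$ and $\CC_1 \cap \CC_2 = \{p\}$, both $u, v$ lie in $S \setminus \CC_2$, hence on the same side of $\ell_{p v'}$, so $e_1$ does not cross $e_2$.

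The main obstacle is precisely this last subcase, where $e_1$ and $e_2$ are vertex-disjoint but $e_2$ is incident to the shared extreme point $p \in \CC_1 \cap \CC_2$. A direct use of condition (iv) on $\CC_1$ fails because $p$ lies in $\CC_1$, so it sits on the ``chain side'' of $\ell_{uv}$ while $v' \in S \setminus \CC_1$ sits on the opposite side, a configuration entirely consistent with $e_1$ and $e_2$ crossing. The resolution is to swap the roles of $\CC_1$ and $\CC_2$, using the incident edge $e_2$ as the secant in condition (iv) for $\CC_2$, so that both endpoints of $e_1$ are forced into the same open halfplane.
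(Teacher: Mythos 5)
Your proposal proves a genuinely weaker statement than the lemma claims. You show that the union $\Gamma$ of the concave chains is a \emph{plane} spanning cycle (a $2$-regular, connected subgraph with no two $\Gamma$-edges crossing). But the paper defines an \emph{uncrossed} spanning cycle $\HC$ to be one for which \emph{no edge joining two points of $S$ crosses any edge of $\HC$} — crossings are excluded against the complete geometric graph on $S$, not merely against the other cycle edges. This stronger property is exactly what makes the cycle a ``spine'' that later structural arguments (e.g.\ Observation~\ref{obs:valid_flips} and the weight measure) can lean on; planarity of the cycle alone would not suffice for those.

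Concretely, your case analysis in (c) always assumes the second edge $e_2 = u'v'$ lies on some chain $\CC_2$, and this assumption is used in the crucial subcase $u' = p \in \CC_1 \cap \CC_2$, where you apply condition~(iv) to $\CC_2$ with the secant $\ell_{pv'}$. For a general edge $pq$ of $S$ this is unavailable. The genuinely hard case — which your proof never touches — is a chain edge $rs$ on the hull of $\CC$ together with an arbitrary edge $pq$ with exactly one endpoint (say $q$) in $\CC$ and the other ($p$) outside $\CC$. Here $p$ and $q$ sit on opposite sides of $\ell_{rs}$ by condition~(iv), which is consistent with $pq$ crossing $rs$; the cases ``both endpoints in $\CC$'' and ``both outside $\CC$'' that you do handle are the easy ones. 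The paper resolves the mixed case by showing that, if $pq$ crossed $rs$, then one of the lines $\ell_{qr}$, $\ell_{qs}$ would place $p$ on the same side as the chain's extreme points, forcing $p \in \CC$ by condition~(iv), a contradiction. That step is the heart of the lemma and needs to be added to your argument.

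One positive remark: your items (a) and (b) — that $\Gamma$ is $2$-regular and that the chains are in bijection with hull edges, so $\Gamma$ really is a single Hamiltonian cycle — make explicit something the paper treats as self-evident, and that part of your write-up is correct and useful.
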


\begin{proof}
Let $S \in \GDC$ and denote the extreme points of $S$ by $u_1, \dots, u_k$ in circular order. Assume, for the sake of contradiction, that there is an edge $rs$ along the convex hull of a concave chain $\CC(u_i, u_{i+1})$ that is crossed by some edge $pq$. Then $p$ and $q$ lie on different sides of the line $\ell_{rs}$ through $rs$ and hence, at least one of the points $p$ or $q$ must belong to $\CC(u_i, u_{i+1})$, say $q \in \CC(u_i, u_{i+1})$. Furthermore, $\ell_{qs}$ or $\ell_{qr}$ must 
have $p$ on the same side as $u_i$ or $u_{i+1}$ (note that this is also the case if $r$ or $s$ and $q$ coincide with $u_i, u_{i+1}$); see Figure~\ref{app:fig:gdc_uncrossed_cycle}. Hence, also $p \in \CC(u_i, u_{i+1})$ holds. However, since $rs$ is a hull edge of $\CC(u_i, u_{i+1})$ it cannot be crossed by $pq$; a contradiction.\qed
\end{proof}

\begin{figure}[t]
\centering
\includegraphics[scale=0.7,page=1]{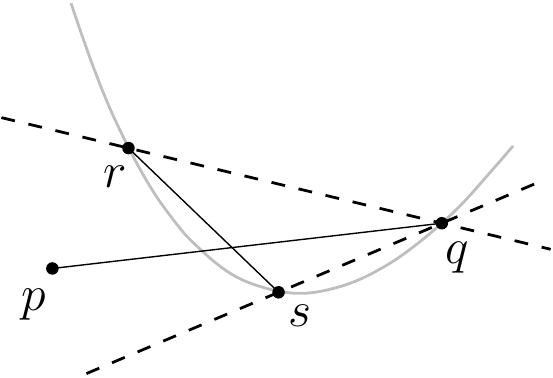}
\caption{Illustration of the proof of Lemma~\ref{lem:gdc_fulfills_p1}.}
\label{app:fig:gdc_uncrossed_cycle}
\end{figure}

Before diving into the details of the proof for generalized double circles, we start by collecting preliminary results in a slightly more general setting,
namely for point sets $S$ fulfilling the following property: 

\begin{description}
\item[(P1)\label{item:p1}] there is an \emph{uncrossed} 
spanning cycle $\HC$ on $S$, i.e.,  
no edge joining two points of $S$ crosses any edge~of~$\HC$.
\end{description}

A point set fulfilling \ref{item:p1} is called \emph{spinal} point set. When considering a spinal point set $S$, we first fix an uncrossed spanning cycle~$\HC$, which we call \emph{spine} and all edges in $\HC$ \emph{spine edges}. For instance, generalized double circles are spinal point sets and the spine is precisely the uncrossed spanning cycle formed by the concave chains as described above.
Whenever speaking of the spine or spine edges for some point set without further specification, the underlying uncrossed cycle is either clear from the context, or the statement holds for any choice of such a cycle. Furthermore, we call all edges in the exterior/interior of the spine \emph{outer/inner edges}.

We define the \emph{canonical paths} to be those that consist only of spine edges.
Note that this definition also captures the
canonical paths used by Akl et al., and that any canonical path
can be transformed into any other by a single flip (of Type~2).
Two vertices incident to a common spine edge are called \emph{neighbors}.

Next, we need a strategy to flip an arbitrary path to a canonical path.
The biggest issue is to ensure that all intermediate
paths in the flip sequence are plane.
To this end, it is very helpful if many triangles that are spanned
by a spine edge 
and a third point of $S$ are \emph{empty}, i.e., 
do not contain any other point from $S$ in their interior. 
In the extreme case, where all such triangles are empty 
it is not hard to show that the flip graph is connected. We omit the details 
as it turns out that only convex point sets fulfill this extreme case (see Appendix~\ref{app:sec:extreme_case}).

\subparagraph*{Valid flips.}
We collect a few observations which will be useful to confirm the validity of a flip. Whenever we apply more than one flip, the notation in subsequent flips refers to the original graph and not the current (usually we apply one or two flips in a certain step). 
Figure~\ref{fig:valid_flips} gives an illustration of Observation~\ref{obs:valid_flips}.

\begin{figure}[t]
	\centering
	{\includegraphics[page=1,scale=0.55]{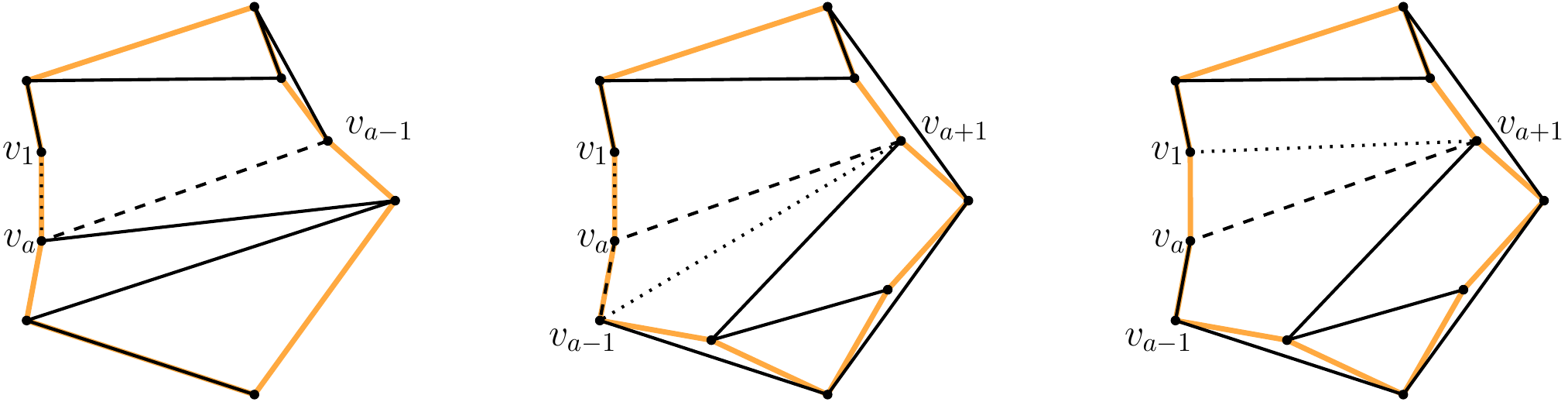}}
\caption{\label{fig:valid_flips}
	\emph{Left to right:} Illustration of the three flips in Observation~\ref{obs:valid_flips}. The spine is depicted in orange and edge flips are indicated by replacing dashed edges for dotted (in the middle, the two flips must of course be executed one after the other).
	}
\end{figure}

\begin{observation}\label{obs:valid_flips}
Let $S$ be a spinal point set, $P = v_1, \dots, v_n \in \PS$, and $v_1, v_\mya$ ($\mya \neq 2$) be neighbors. Then the following flips are valid (under the specified additional assumptions):

\vspace*{-0.35cm}
\begin{align*}
(a&) \text{ flip } v_{\mya-1}v_\mya \text{ to } v_1v_\mya \\[1.1em]
(b&) \text{ flip } v_\mya v_{\mya+1} \text{ to } v_{\mya-1}v_{\mya+1} && \parbox[t]{\linegoal}{ (if the triangle $\Delta v_{\mya-1}v_\mya v_{\mya+1}$ is empty and (b) is \\ performed subsequently after the flip in (a)) } \\[0.1em]
(c&) \text{ flip } v_\mya v_{\mya+1} \text{ to } v_1v_{\mya+1} && \parbox[t]{\linegoal}{ (if the triangle $\Delta v_1v_\mya v_{\mya+1}$ is empty and \\ $v_{\mya-1}v_\mya$ is a spine edge) }
\end{align*}
\end{observation}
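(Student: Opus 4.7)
The plan is to verify each of the three flips along two axes: (i) the topological condition that deleting the prescribed edge and inserting the new edge produces a valid spanning path (a routine check, since each operation is a Type~1 flip in the sense of Figure~\ref{fig:fliptype}); and (ii) the planarity condition that the new edge crosses no other edge of the current graph. The topological part is immediate by tracing how the path is re-routed, so the substance lies entirely in the planarity argument. The central geometric lemma I will invoke is: if a segment $s$ has both endpoints outside a triangle $\Delta$ whose interior contains no point of $S$, and $s$ crosses one side of $\Delta$, then $s$ must also cross another side of $\Delta$.

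For part~(a), the new edge $v_1 v_\mya$ is by definition a spine edge (since $v_1$ and $v_\mya$ are neighbors), and by~\ref{item:p1} no edge on $S$ crosses it, so in particular no edge of $P$ does. Parts~(b) and~(c) both apply the empty-triangle lemma to the triangle whose third side is the newly inserted edge. In part~(b) this triangle is $\Delta v_{\mya-1} v_\mya v_{\mya+1}$, whose other two sides are $v_{\mya-1} v_\mya$ and $v_\mya v_{\mya+1}$, both edges of the original path $P$. Any candidate crossing edge $e$ of the current graph is classified by how it sits with respect to the triangle: if $e$ shares an endpoint with $v_{\mya-1}$ or $v_{\mya+1}$ it cannot properly cross the new edge; if $e$ equals the freshly added $v_1 v_\mya$, then $e$ is a spine edge and hence uncrossed by~\ref{item:p1}; and if $e$ avoids $\{v_{\mya-1}, v_\mya, v_{\mya+1}\}$ altogether, the empty-triangle lemma forces $e$ to also cross $v_{\mya-1} v_\mya$ or $v_\mya v_{\mya+1}$, which is impossible because $e$, $v_{\mya-1} v_\mya$, and $v_\mya v_{\mya+1}$ all lie in the plane path $P$.

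Part~(c) proceeds in the same spirit with the triangle $\Delta v_1 v_\mya v_{\mya+1}$: the side $v_1 v_\mya$ is a spine edge (hence uncrossed by~\ref{item:p1}), the side $v_\mya v_{\mya+1}$ belongs to the plane path $P$, and an edge of $P$ with no endpoint in $\{v_1, v_\mya, v_{\mya+1}\}$ is handled by the empty-triangle lemma exactly as in~(b). The main obstacle, and the reason for the extra hypothesis in~(c), is a single subtle case: the edge $v_{\mya-1} v_\mya \in P$ shares no endpoint with the new edge $v_1 v_{\mya+1}$, and its endpoint $v_{\mya-1}$ may well lie on the far side of the line through $v_1 v_{\mya+1}$, so it could a priori cross the new edge. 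This is precisely what the assumption that $v_{\mya-1} v_\mya$ is a spine edge rules out: by~\ref{item:p1}, no edge on $S$ crosses $v_{\mya-1} v_\mya$. In contrast, this case does not arise in~(b), because there $v_{\mya-1} v_\mya$ is itself a side of the triangle and thus already absorbed into the empty-triangle argument, which accounts for the asymmetry between the two hypotheses.
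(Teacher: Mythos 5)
Your proof is correct and supplies exactly what the paper leaves implicit (the paper states this as an Observation without proof): you verify the path structure by tracing the re-routing, then reduce planarity of the inserted edge to the empty-triangle parity argument, to \ref{item:p1} for any spine edge (including the crucial side $v_1 v_\mya$ and the extra hypothesis on $v_{\mya-1} v_\mya$ in~(c)), and to the planarity of the original path $P$ for the remaining sides. The explanation of why the extra hypothesis in~(c) has no analogue in~(b) matches the paper's own remark immediately following the observation that only non-crossing of $v_1v_{\mya+1}$ with $v_{\mya-1}v_\mya$ is really needed.
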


Strictly speaking, in Observation~\ref{obs:valid_flips}(c) we do not require $v_{\mya-1}v_\mya$ to be a spine edge, but merely to be an edge not crossing $v_1v_{\mya + 1}$. 
The following lemma provides structural properties for generalized double circles, if the triangles in Observation~\ref{obs:valid_flips}(b,c) are non-empty: 

\begin{restatable}{lemma}{lemPrelGDCOne}\label{app:lem:prel_gdc_1}
Let $S \in \GDC$ and $p,q,x \in S$ such that $p$ and $q$ are neighbors. Further, let the triangle $\Delta pqx$ be non-empty. Then the following holds:
\begin{enumerate}[(i)]
\item At least one of the two points $p,q$ is an extreme point (say $p$),
\item $x$ does not lie on a common chain with $p$ \emph{and} $q$, but shares a common chain with either $p$ or $q$ (the latter may only happen if $q$ is also an extreme point).
\end{enumerate}
\end{restatable}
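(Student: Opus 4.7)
The plan is to derive all three conclusions from the single master implication
\[
(\star)\qquad \text{if } \Delta pqx \text{ is non-empty, then } x \text{ lies on a chain containing } p \text{ or } q.
\]
Given~$(\star)$, part~(i) is immediate: if both $p$ and $q$ were interior they would each belong only to the chain $\CC$ carrying the spine edge $pq$, forcing $x \in \CC$; but the short argument below shows that $x \in \CC$ makes $\Delta pqx$ empty, contradicting non-emptiness. Part~(ii)(a) is precisely the contrapositive of that statement, and (ii)(c) follows because an interior $q$ lies on no chain other than $\CC$, so $x$ sharing a chain with $q$ but not with $p$ would again force $x \in \CC$.

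The case $x \in \CC$ is handled as follows: $p, q, x$ are three vertices of the convex polygon $\CC$, so convex position forbids any other vertex of $\CC$ from the interior of $\Delta pqx$, while property~(iv) places every point of $S \setminus \CC$ in the open halfplane of $\ell_{pq}$ not containing $u_i$ or $u_{i+1}$, whereas $\Delta pqx$ lies on the opposite side. For the remainder, assume that $x$ shares no chain with either $p$ or $q$, and for contradiction pick $y \in S$ strictly inside $\Delta pqx$. Since the open triangle lies in the interior of $\CH(S)$, $y$ cannot be an extreme point of $S$ and therefore belongs to a unique chain $\CC''$; moreover $\CC'' \neq \CC$, because $y$ and $\CC$ are on opposite sides of $\ell_{pq}$ by the same application of property~(iv).

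Next I split cases on whether $\CC'' \cap \{p, q, x\}$ is empty. If it is not, pick any $z \in \CC'' \cap \{p, q, x\}$; the choice $z \in \{p, q\}$ forces $z$ to be extreme with $\CC''$ its second chain, and the choice $z = x$ simply uses $x \in \CC''$. Applying property~(iv) to the pair $(z, y) \in \CC''$ now puts both of the remaining triangle vertices on the same open halfplane of $\ell_{zy}$, since our hypothesis on $x$ together with the fact that $p$ and $q$ share no chain other than $\CC$ ensures that neither of them lies on $\CC''$. However, as $y$ lies strictly inside $\Delta pqx$, the ray from the vertex $z$ through $y$ must leave the triangle through the opposite edge and therefore separates the two other vertices, a contradiction.

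The remaining case $\CC'' \cap \{p, q, x\} = \emptyset$ is the most delicate and relies on the uncrossedness of the spine (Lemma~\ref{lem:gdc_fulfills_p1}). Let $y^-$ denote a chain neighbor of $y$ on $\CC''$. The spine edge $yy^-$ cannot be crossed by any edge on $S$, in particular not by $pq$, $qx$, or $xp$; furthermore $y^- \notin \{p, q, x\}$ (since $\CC''$ avoids $\{p, q, x\}$), and general position excludes $y^-$ from the relative interior of a triangle edge. Consequently the segment $yy^-$ cannot escape the closed triangle, forcing $y^-$ strictly inside $\Delta pqx$ as well. Propagating this step by step along $\CC''$ in both directions places every vertex of $\CC''$ --- including its two extreme endpoints $u_j, u_{j+1}$ --- strictly inside $\Delta pqx \subseteq \CH(S)$, which is impossible since extreme points lie on $\partial\CH(S)$. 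I expect the main technical obstacle to sit in this propagation step: one must carefully combine the spine's uncrossedness with general position to rule out the spine edge escaping through a triangle vertex, and then iterate the argument cleanly in both directions along $\CC''$ until it reaches the two convex-hull endpoints.
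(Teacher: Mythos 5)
Your proof is correct, and it takes a genuinely different route from the paper's. You factor the whole lemma through a single master implication~$(\star)$ (a non-empty $\Delta pqx$ forces $x$ onto a chain of $p$ or $q$), then observe that $x\in \CC_{pq}$ would make the triangle empty, and extract (i), (ii)(a), (ii)(c) as short corollaries. The paper instead proves (i) and (ii) by separate case analyses, splitting on whether $x$ and the witness point $y$ share a chain. The case splits are different: yours is on whether the chain of $y$ meets $\{p,q,x\}$. Both are exhaustive and both hinge on the same mechanism (property~(iv) applied to the line through $y$ and a chain-mate separating the triangle vertices); the net structure of your argument is cleaner in that all conclusions flow from one statement.

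The one place where your argument is heavier than necessary is Case~2 of $(\star)$, where $\CC''$ avoids $\{p,q,x\}$. Your propagation along $\CC''$, using uncrossedness of the spine (Lemma~\ref{lem:gdc_fulfills_p1}) to show a spine edge cannot escape the closed triangle, is correct but overkill: since none of $p,q,x$ lies on $\CC''$, you may directly apply property~(iv) to the line through $y$ and \emph{any} other vertex $z\in\CC''$ (a chain always has at least two points). Because $y$ is strictly interior to $\Delta pqx$, every such line produces a $2$--$1$ split of $\{p,q,x\}$, contradicting property~(iv), which places all of $S\setminus\CC''$ in a single open halfplane. This is exactly the argument the paper uses in the analogous sub-case, and it avoids invoking uncrossedness of the spine at all -- the paper's proof of this lemma in fact never uses that property, relying solely on property~(iv) and convex position of chains. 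Your propagation argument is a nice alternative, but note it works precisely because every triangle side is either a spine edge or at least an edge on $S$, so uncrossedness traps the chain; a reader who wants the minimal hypotheses will prefer the shorter route.
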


\begin{figure}[t]
\centering
\includegraphics[scale=0.63,page=1]{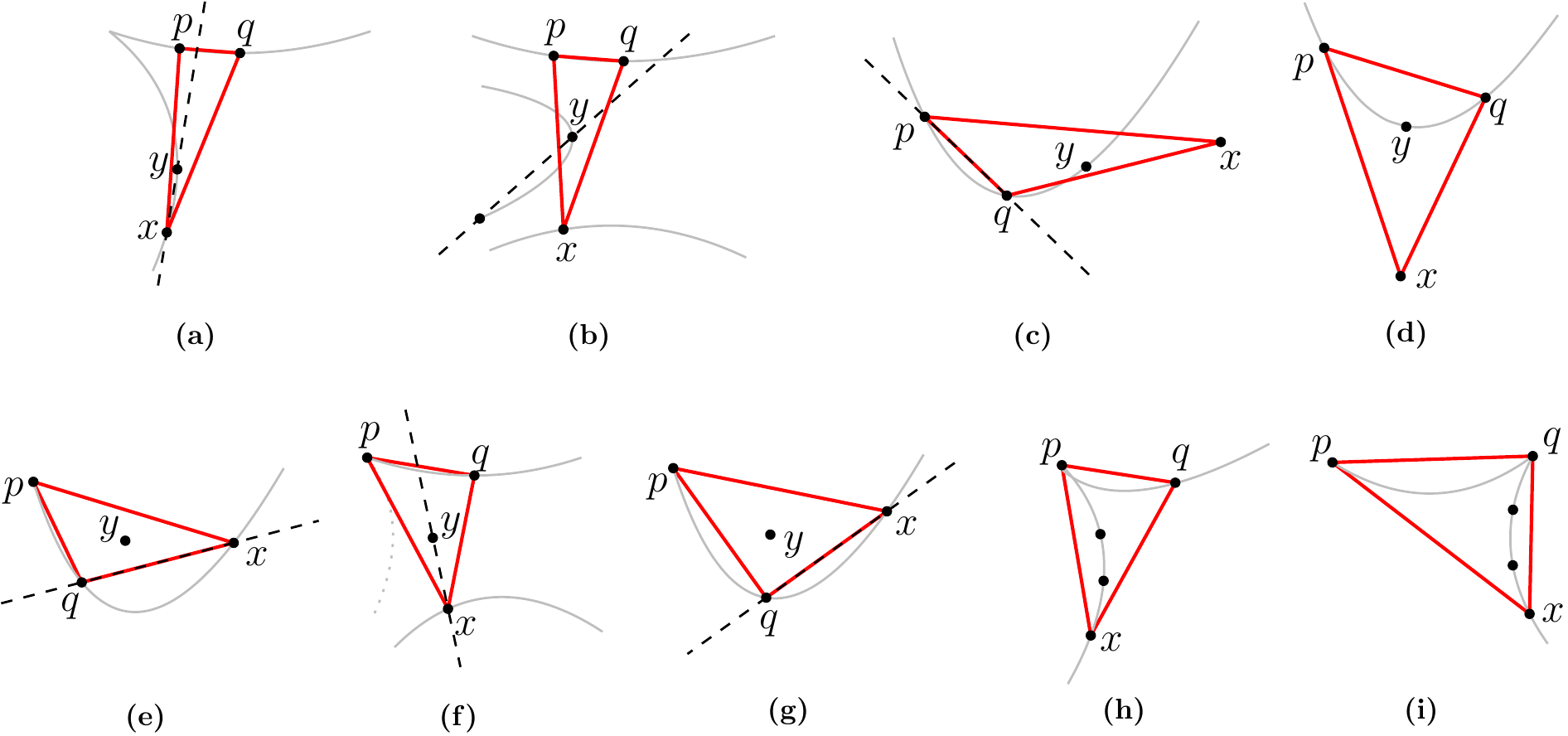}
\caption{Illustration of the proof of Lemma~\ref{app:lem:prel_gdc_1}.}
\label{app:fig:gdc_lemma}
\end{figure}

\begin{proof}
Concerning part (i), assume for the sake of contradiction that neither $p$ nor $q$ is an extreme point and let $y \in \Delta pqx$. First recall that $p$ and $q$ belong to exactly one concave chain and since they are neighbors, they belong to the same chain -- call it $C_{pq}$. Further note that $y$ is not an extreme point and hence, also belongs to a unique concave chain.

Let us first consider the case that $x$ and $y$ share a common chain $C_{xy}$. Since concave chains are in convex position, this cannot be $C_{pq}$. However, since $y$ is contained in the interior of $\Delta pqx$, the line $\ell_{xy}$ through $x$ and $y$ separates $p$ and~$q$; a contradiction to property (iv) of the chain $C_{xy}$ (see Figure~\ref{app:fig:gdc_lemma}(a)).

Consider next, that $x$ and $y$ do not share a common chain. 
If $y$ does not belong to $C_{pq}$, then neither of the points $p,q,x$ belongs to the chain $C_y$ of $y$. However, any line through $y$ and another point of $C_y$ separates these three points; a contradiction to property (iv) of $C_y$ (see Figure~\ref{app:fig:gdc_lemma}(b)).
If $y$ belongs to $C_{pq}$, then either $p$ and $q$ are not neighbors or the line $\ell_{pq}$ has $y$ and the extreme points (in $S$) of $C_{pq}$ on the same side (see Figure~\ref{app:fig:gdc_lemma}(c) and Figure~\ref{app:fig:gdc_lemma}(d)); a contradiction either way. This finishes the proof of part (i).

Concerning part (ii), first note that the extreme point $p$ belongs to two chains $C_{pq}$ and $C_{p\bar{q}}$ (the former also contains $q$ while the latter does not).
Also assume that $q$ is not an extreme point. Then we need to show that $x$ belongs to $C_{p\bar{q}}$. 

First we argue that $x$ does not belong to $C_{pq}$. Indeed, if this was the case, the point $y \in \Delta pqx$ (which does not belong to $C_{pq}$) would lie on the same side of $\ell_{qx}$ as $p$ (see Figure~\ref{app:fig:gdc_lemma}(e)); a contradiction to property (iv) of $C_{pq}$.

Second, suppose for the sake of contradiction that $x$ belongs to some other chain $C \neq C_{p\bar{q}}$. 
Again we consider the two cases where $x$ and $y$ share a common chain or not. 

First, let $x$ and $y$ share a common chain (which, as before, is not $C_{pq}$).
Since $\ell_{xy}$ separates $p$ and $q$ this is a contradiction, except $p,x,y$ lie on a common chain, which however is excluded because $x \notin C_{p\bar{q}}$ (see Figure~\ref{app:fig:gdc_lemma}(f)).

Second, let $x$ and $y$ not share a common chain. The cases that $y$ belongs to a chain that is not $C_{p\bar{q}}$ are completely analogous as before (see Figures~\ref{app:fig:gdc_lemma}(b,c,d)). Hence, it remains to consider $y \in C_{p\bar{q}}$. In this case, the line $\ell_{py}$ separates $q$ and~$x$; again a contradiction (see Figure~\ref{app:fig:gdc_lemma}(g)).

The case that $q$ is an extreme point is now completely analogous. Figures~\ref{app:fig:gdc_lemma}(h,i) summarize the possible configurations for a non-empty triangle $\Delta pqx$.\qed
\end{proof}

\subparagraph*{Combinatorial distance measure.}
In contrast to the proof for wheel sets, it may now not be possible anymore to directly increase the number of spine edges and hence, we need a more sophisticated measure.
Let $\HC$ be the spine of a spinal point set $S$ and $p,q \in S$. Further let $o \in \{\text{cw},\text{ccw}\}$ be an orientation. We define the 
\emph{distance} between $p,q$ \emph{in direction $o$}, denoted by 
$d^o(p, q)$, as the number of spine edges along $\HC$ that
lie between $p$ and $q$ in direction $o$. Furthermore, we define the \emph{distance} between $p$ and $q$ to be
\[
d(p, q) = \min \{ d^{\text{cw}}(p, q), d^{\text{ccw}}(p, q)\}.
\] 

Note that neighboring points along the spine have distance one. Using this notion we define the \emph{\weight} of an edge to be the distance between its endpoints and
the (overall) \weight of a graph on $S$ 
to be the sum of its edge weights.

\begin{figure}[t]
	\centering
	{\includegraphics[page=1,scale=0.75]{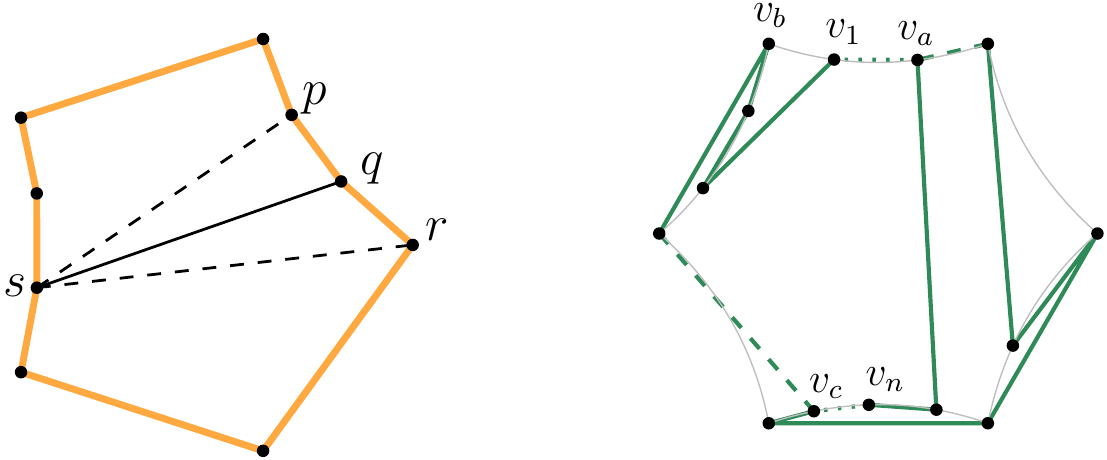}}
\caption{\label{fig:preliminary_observations}
     \emph{Left:} Illustration of Observation~\ref{obs:three_neighbors}. One of the dashed edges has smaller \weight than the solid: $d(s,q) = 4$; $d(s,p) = 4$; $d(s,r) = 3$.
     \emph{Right:} Illustration of Lemma~\ref{app:lem:GDC:neighbors}. The initial path is depicted by solid and dashed edges. Flipping the dashed edges to the dotted edges increases the number of spine edges. 
	}
\end{figure}

Our goal will be to perform \weight decreasing flips. To this end, we state two more useful preliminary results (see also Figure~\ref{fig:preliminary_observations}):

\begin{observation}\label{obs:three_neighbors}
Let $S$ be a spinal point set, 
$p, q, r$ be three neighboring points in this order (i.e., $q$ lies between $p$ and $r$), and $s \in S\setminus\{p,q,r\}$ be another point. Then $d(p,s) < d(q,s)$ or $d(r,s) < d(q,s)$ holds.
\end{observation}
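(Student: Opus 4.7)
The plan is to exploit the fact that $d(\cdot,\cdot)$ is simply a cyclic graph distance on the spine $\HC$: the spine forms a cycle on $n$ vertices, and for any two distinct points $x,y\in S$ we have $d^{\mathrm{cw}}(x,y)+d^{\mathrm{ccw}}(x,y)=n$, so $d(x,y)=\min(d^{\mathrm{cw}}(x,y),\,d^{\mathrm{ccw}}(x,y))\le n/2$. In particular, moving along the spine by one step in a given direction either increases or decreases the corresponding directional distance by exactly one. The whole statement is therefore purely combinatorial, and no geometric property of $S$ beyond the existence of the spine is needed.

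First I would note a basic ``room to move'' observation: since $p$ and $r$ are precisely the two spine-neighbors of $q$, and $s \notin \{p,q,r\}$, we have $d(q,s)\ge 2$. This ensures that all quantities appearing below are genuinely positive.

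Next, by the symmetry between the two orientations, I may assume without loss of generality that the shorter spine-arc from $q$ to $s$ runs clockwise, i.e.\ $d(q,s) = d^{\mathrm{cw}}(q,s) =: k \ge 2$. Exactly one of $p$ and $r$ is the clockwise spine-neighbor of $q$; call this point $x$. Then $x$ lies on the clockwise arc from $q$ to $s$ at one step closer to $s$, so $d^{\mathrm{cw}}(x,s) = k-1 \ge 1$, and consequently
\[
d(x,s) \;\le\; d^{\mathrm{cw}}(x,s) \;=\; k-1 \;<\; k \;=\; d(q,s).
\]
If $x = p$ this yields $d(p,s) < d(q,s)$, and if $x = r$ it yields $d(r,s) < d(q,s)$; in either case the claimed inequality holds.

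The only potential subtlety is the ``tie'' case $d^{\mathrm{cw}}(q,s) = d^{\mathrm{ccw}}(q,s) = n/2$, but here we may declare either direction to be the shorter one and run the same argument, so no separate case analysis is required. I do not expect any real obstacle: the proof is essentially a one-line consequence of the triangle inequality on a cycle, and the main thing to be careful about is just to phrase the symmetric choice of orientation cleanly.
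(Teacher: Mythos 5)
Your proof is correct. The paper states Observation~\ref{obs:three_neighbors} without an explicit proof, so there is no reference argument to compare against, but your argument supplies a clean and complete justification. You correctly identify that the claim is a purely combinatorial fact about graph distance on the spine cycle and has nothing to do with the geometry of $S$. The core step is exactly right: take the shorter directional arc from $q$ to $s$, say clockwise with length $k=d(q,s)\ge 2$; the clockwise neighbor $x\in\{p,r\}$ of $q$ lies one step along this arc, so $d^{\mathrm{cw}}(x,s)=k-1$ and hence $d(x,s)\le k-1 < d(q,s)$. The preliminary remark that $d(q,s)\ge 2$ (because $s\notin\{p,q,r\}$) guarantees that $x$ is strictly interior to the arc and distinct from $s$, which is exactly the point where one could otherwise worry. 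Your handling of the tie case $d^{\mathrm{cw}}(q,s)=d^{\mathrm{ccw}}(q,s)$ by declaring either direction shorter is also fine. No gaps.
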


Combining Observation~\ref{obs:valid_flips} and Observation~\ref{obs:three_neighbors}, it is apparent that we can perform \weight decreasing flips whenever $\Delta v_{\mya-1}v_\mya v_{\mya+1}$ and $\Delta v_1v_\mya v_{\mya+1}$ are empty. 

\begin{restatable}{lemma}{lemGDCNeighbors}\label{app:lem:GDC:neighbors}
Let $S$ be a spinal point set, $P = v_1, \ldots, v_n \in \PS$, and $v_\mya, v_\myb$ ($\mya, \myb \neq 2$) be neighbors of $v_1$ as well as $v_\myc, v_\myd$ ($\myc, \myd \neq n-1$) be neighbors of~$v_n$. If $\max(\mya, \myb) > \min(\myc, \myd)$, then the number of spine edges in $P$ can be increased by performing at most two flips, which also decrease the overall \weight of $P$.
\end{restatable}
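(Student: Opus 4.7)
WLOG I may assume $a < b$ and $c < d$, so the hypothesis $\max(\mya,\myb)>\min(\myc,\myd)$ becomes $b > c$. My plan is to use the four natural ``endpoint-shift'' flips arising from Observation~\ref{obs:valid_flips}(a): $\phi_a$ replaces $v_{a-1}v_a$ by $v_1v_a$, $\phi_b$ replaces $v_{b-1}v_b$ by $v_1v_b$, and symmetrically $\phi_c$ replaces $v_cv_{c+1}$ by $v_cv_n$, and $\phi_d$ replaces $v_dv_{d+1}$ by $v_dv_n$. Each $\phi$ is valid because the inserted edge is a spine edge, hence uncrossed by property~(P1), and each replaces an edge of weight $\ge 1$ by a spine edge of weight~$1$.

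\emph{Case 1 (one flip).} If at least one of the four removed edges is itself non-spine, then its weight is at least $2$, so the corresponding $\phi$-flip strictly decreases the weight of $P$ and increases the spine count by one. This already settles the lemma.

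\emph{Case 2 (two flips).} All four edges $v_{a-1}v_a$, $v_{b-1}v_b$, $v_cv_{c+1}$, $v_dv_{d+1}$ are spine. Then $v_{b-1}$ is the second spine-neighbor of $v_b$ (the first being $v_1$), and analogously for the others, so the spine contains two $5$-vertex arcs: $v_{a-1}{-}v_a{-}v_1{-}v_b{-}v_{b-1}$ and $v_{c+1}{-}v_c{-}v_n{-}v_d{-}v_{d+1}$. I exploit the hypothesis $b>c$ together with the fact that the spine is a simple cycle to argue that these arcs overlap in one of a few canonical ways, determined by $b-c$ and by which pairs among $\{v_{a-1},v_{b-1}\}$ and $\{v_{c+1},v_{d+1}\}$ coincide. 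In each such configuration I exhibit a two-flip sequence whose first flip is a spine-for-spine swap (one of the $\phi$'s, neutral for the spine count) used to reconfigure $P$, so that in the resulting path a second Type-1 flip inserts a further spine edge while removing only a non-spine edge. The prototypical sub-case is $b=c+1$, where $v_{b-1}=v_c$: here I perform $\phi_b$ first, then flip the still-present non-spine edge $v_{n-1}v_n$ to the spine edge $v_{b-1}v_n = v_c v_n$; the second flip is valid because $v_c v_n$ is a spine edge, hence uncrossed. Symmetric strategies based on $\phi_a$, $\phi_c$, or $\phi_d$ (for instance, applying $\phi_c$ first when $b=c+2$ so that after the swap the edge $v_1v_b$ becomes insertable via a single further Type-1 flip) cover the remaining overlap patterns.

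\emph{Main obstacle.} The delicate step is Case~2: showing that in every admissible overlap of the two spine arcs (forced by $b>c$, by the planarity of $P$, and by all four edges being spine), some pair of flips exists that is simultaneously valid (the inserted edges must not cross $P$) and yields the required net gain of one spine edge together with a strict decrease in weight. This relies on careful bookkeeping of positions in $P$ versus positions along the spine cycle, and makes essential use of Observation~\ref{obs:three_neighbors} to obtain the distance reductions underlying the weight decrease, and of property~(P1) to certify the validity of any flip that inserts a spine edge.
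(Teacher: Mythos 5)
Your Case~1 is fine (though checking all four candidate edges is more than needed; after the relabeling described below only two need to be examined). The genuine gap is Case~2. You correctly identify that the first flip should be a spine-for-spine swap and the second should turn a non-spine edge into a spine edge, and you yourself flag this step as \enquote{delicate}; but you never actually exhibit the second flip in general. You work out only the degenerate sub-case $b=c+1$ (where $v_{b-1}=v_c$ coincide), and for the rest you gesture at an enumeration of \enquote{overlap patterns} and \enquote{symmetric strategies} that you do not carry out.

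In fact no case distinction by overlap pattern is needed. Relabel so that $a>c$: take $v_a$ to be the later-in-$P$ of the two spine-neighbors of $v_1$ under consideration and $v_c$ the earlier-in-$P$ of the two spine-neighbors of $v_n$; the hypothesis $\max(a,b)>\min(c,d)$ is then exactly $a>c$. If either $v_{a-1}v_a$ or $v_c v_{c+1}$ is not spine, a single flip via Observation~\ref{obs:valid_flips}(a) already increases the spine count and strictly decreases the weight. Otherwise both are spine, which forces $v_{c-1}v_c$ to be non-spine ($v_c$ already has spine-neighbors $v_{c+1}$ and $v_n$). Now perform one uniform two-flip sequence: flip $v_{a-1}v_a$ to $v_1 v_a$ (spine-for-spine, weight-neutral), producing the path $v_{a-1},\dots,v_{c+1},v_c,v_{c-1},\dots,v_1,v_a,\dots,v_n$; then flip $v_{c-1}v_c$ to $v_c v_n$. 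The second flip is a valid Type~1 flip from the $v_n$ end precisely because $c<a$ places $v_c$ in the reversed prefix: removing the edge between $v_c$ and $v_{c-1}$ and inserting $v_c v_n$ inverts the suffix ending at $v_n$. The inserted edge $v_c v_n$ is spine, hence uncrossed by~\ref{item:p1}, and the removed edge $v_{c-1}v_c$ is non-spine of weight at least~$2$, so one spine edge is gained and the total weight strictly drops. This disposes of all of your Case~2 at once, with no bookkeeping of how the two five-vertex spine arcs overlap and no need for Observation~\ref{obs:three_neighbors}.
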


\begin{proof}\label{lem:GDC:neighbors:proof}
Let $\max(\mya, \myb) > \min(\myc, \myd)$ and assume w.l.o.g.\ that $\mya > \myc$ holds. If $v_{\mya-1}v_\mya$ or $v_\myc v_{\myc+1}$ is not a spine edge, we can increase the number of spine edges by a single flip using Observation~\ref{obs:valid_flips}(a). Hence, we can assume these edges to be spine edges, which implies that $v_\mya v_{\mya+1}$ and $v_{\myc-1}v_\myc$ are not spine edges. 

Then we flip the spine edge $v_{\mya-1}v_\mya$ to another spine edge $v_1 v_\mya$ resulting in the path $v_{\mya-1}, \ldots, v_{\myc-1}, v_\myc, \ldots, v_1, v_\mya, \ldots, v_n$. 
Next, flipping $v_{\myc-1}v_\myc$ to $v_\myc v_n$ is valid 
and now replaces a non-spine edge by a spine edge. See Figure~\ref{fig:preliminary_observations} (right) for an illustration.\qed
\end{proof}

Note that $v_\myb$ or $v_\myd$ in Lemma~\ref{app:lem:GDC:neighbors} may not exist, if the first or last edge of $P$ is a spine edge.
Lemma~\ref{app:lem:GDC:neighbors} essentially enables us to perform \weight decreasing flips whenever the path traverses a neighbor of $v_n$ before it reached both neighbors of $v_1$.
We are now ready to prove Theorem~\ref{thm:GDC}, but briefly summarize the proof strategy from a high level perspective beforehand:

\subparagraph*{High level proof strategy.} To flip an arbitrary path $P \in \PS$ to a canonical path, we perform iterations of suitable flips such that in each iteration we either

\begin{enumerate}
	\item[(i)] increase the number of spine edges along $P$ (while not increasing the overall \weight of~$P$), or
	\item[(ii)] decrease the overall \weight of $P$ (while not 
	decreasing the number of spine edges along~$P$).
\end{enumerate}

Note that for the connectivity of the flip graph it is not necessary to guarantee the non increasing overall \weight in the first part. However, this will provide us with a better bound on the diameter of the flip graph.

\thmGDC*

\begin{proof}
Let $P~=~v_1,\dots, v_n~\in~\PS$ be a non-canonical path. As described in the high level proof strategy in Section~\ref{sec:GDC}, we show how to iteratively transform~$P$ to a canonical path by increasing the number of spine edges or decreasing its overall weight. As usual let $v_\mya$ ($\mya \neq 2$) be a neighbor of $v_1$.

We can assume, w.l.o.g., that $v_1$ and $v_n$ are not neighbors, since otherwise we can flip an arbitrary (non-spine) edge of $P$ to the spine edge $v_1v_n$ (performing a Type 2 flip), i.e., $a < n$. 
Furthermore, we can also assume w.l.o.g., that $v_{\mya-1}v_\mya$ is a spine edge, since otherwise we can flip $v_{\mya-1}v_\mya$ to the spine edge $v_1v_\mya$ (Observation~\ref{obs:valid_flips}(a)). 
This also implies that the edge $v_\mya v_{\mya+1}$, which exists because $\mya < n$, is not a spine edge, since $v_\mya$ already has the two neighbors $v_{\mya-1}$ and $v_1$.

We distinguish two cases -- $v_1v_2$ being a spine edge or not:

\begin{description}
\item[Case 1\label{app:gdc_case_1}] $v_1v_2$ is not a spine edge.

This case is easier to handle, since we are guaranteed that both neighbors $v_\mya, v_\myb$ $(\mya,\myb \neq 2)$ of $v_1$ are potential candidates to flip to. In order to apply Observation~\ref{obs:valid_flips}, we require $\Delta v_1v_\mya v_{\mya+1}$ or $\Delta v_1v_\myb v_{\myb+1}$ to be empty. So, let's consider these two subcases separately:

\begin{description}
\item[Case 1.1\label{app:gdc_case_1_1}] $\Delta v_1v_\mya v_{\mya+1}$ (or analogously $\Delta v_1v_\myb v_{\myb+1}$) is empty.

Then we apply the following two flips:
\[
\text{flip } v_\mya v_{\mya+1} \text{ to } v_1v_{\mya+1} \qquad \text{ and } \qquad \text{flip } v_1v_2 \text{ to } v_1v_\mya,
\]
where the first flip results in the path $v_\mya, \ldots, v_1, v_{\mya+1}, \ldots, v_n$ (and is valid by Observation~\ref{obs:valid_flips}(c)) and the second flip then results in the path $v_2, \ldots, v_\mya, v_1, v_{\mya+1}, \ldots, v_n$ (valid due to Observation~\ref{obs:valid_flips}(a)). The first flip replaces a non-spine edge by another non-spine edge and may increase the \weight by at most one. The second flip replaces a non-spine by a spine edge, which also decreases the \weight by at least one. 
Together, the number of spine edges increases, while the overall \weight does not increase.

\item[Case 1.2\label{app:gdc_case_1_2}] $\Delta v_1v_\mya v_{\mya+1}$ and $\Delta v_1v_\myb v_{\myb+1}$ are not empty.

Lemma~\ref{app:lem:prel_gdc_1}(i) implies that ($v_1$ or $v_\mya$) and ($v_1$ or $v_\myb$) is an extreme point.

\begin{description}
\item[Case 1.2.1\label{app:gdc_case_1_2_0a}] $v_1, v_\mya, v_\myb$ are all extreme points.

The triangles $\Delta v_1v_\mya v_{\mya+1}$ and $\Delta v_1v_\myb v_{\myb+1}$ being non-empty implies that both, $v_\mya v_{\mya+1}$ and $v_\myb v_{\myb+1}$ are outer edges (see Figure~\ref{app:fig:gdc_problematic_1}(a)). Not both of them can contain $n/2$ points under\footnote{For an outer edge $e$ (which necessarily connects two points from the same chain), we say that the points along this chain between the two endpoints of $e$ (excluding the endpoints) lie \emph{under} $e$.}
its edge. Hence, one of the two flips -- replacing $v_\mya v_{\mya+1}$ by $v_{\mya-1}v_{\mya+1}$, or $v_\myb v_{\myb+1}$ by $v_{\myb-1}v_{\myb+1}$ must be \weight decreasing, say w.l.o.g. the one for $v_\mya$. Then we apply the following flips:
\[
\text{flip } v_{\mya-1}v_\mya \text{ to } v_1v_\mya \qquad \text{ and } \qquad \text{flip } v_\mya v_{\mya+1} \text{ to } v_{\mya-1}v_{\mya+1}.
\]

The first flip is valid due to Observation~\ref{obs:valid_flips}(a) (replacing a spine edge by another spine edge) and the second is valid due to Observation~\ref{obs:valid_flips}(b) and decreases the \weight by assumption.

\item[Case 1.2.2\label{app:gdc_case_1_2_0b}] $v_1$ and $v_\mya$ are extreme points and $v_\myb$ is not (analogous with exchanged roles of $v_\mya$ and $v_\myb$).

In this case, using Lemma~\ref{app:lem:prel_gdc_1}(ii), we conclude that the triangle $\Delta v_1v_\myb v_{\myb+1}$ must be empty; a contradiction. 

\item[Case 1.2.3\label{app:gdc_case_1_2_1}] $v_1$ is an extreme point and $v_\mya, v_\myb$ are not.

By Lemma~\ref{app:lem:prel_gdc_1}(ii), the edge $v_\mya v_{\mya+1}$ must be an inner edge between the two concave chains of $v_1$. 
If $v_{\mya+1}$ is a neighbor of $v_1$ we can simply replace $v_\mya v_{\mya+1}$ by $v_1v_{\mya+1}$ (Observation~\ref{obs:valid_flips}(a)). Otherwise, the other neighbor $v_\myb$ of $v_1$ cannot be incident to an inner edge and hence, $\Delta v_1v_\myb v_{\myb+1}$ is empty; a contradiction 
(see Figure~\ref{app:fig:gdc_problematic_1}(b)).

\begin{figure}[t]
\centering
\includegraphics[scale=0.61,page=1]{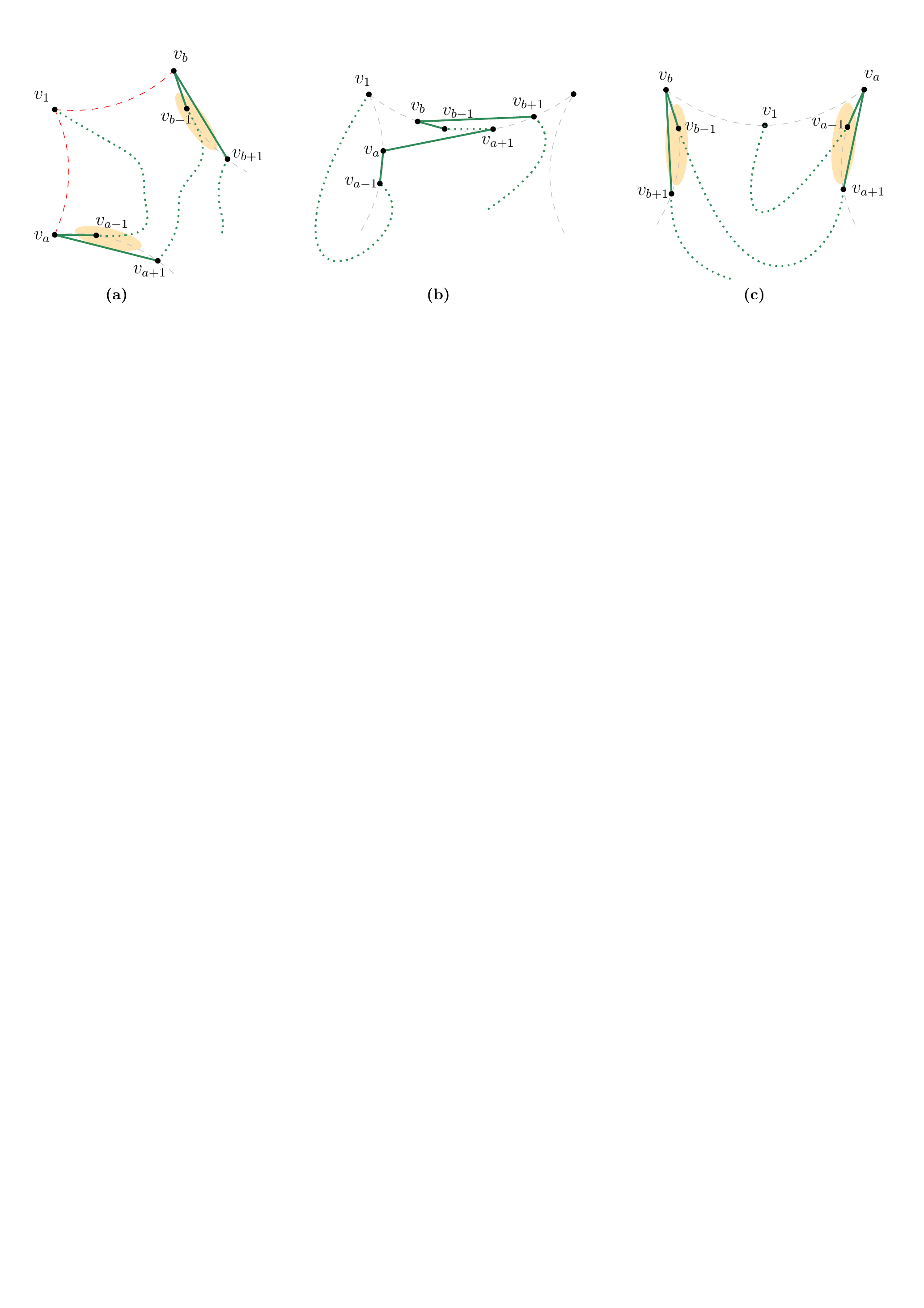}
\caption{(a) \ref{app:gdc_case_1_2_0a}. There cannot be more than $n/2$ points in each yellow region and hence, at least one of the flips $v_\mya v_{\mya+1}$ to $v_{\mya-1} v_{\mya+1}$ or $v_\myb v_{\myb+1}$ to $v_{\myb-1} v_{\myb+1}$ decreases the \weight. (b) \ref{app:gdc_case_1_2_1}. Either $\Delta v_1v_\mya v_{\mya+1}$ or $\Delta v_1v_\myb v_{\myb+1}$ must be empty. (c) \ref{app:gdc_case_1_2_2}.}
\label{app:fig:gdc_problematic_1}
\end{figure}

\item[Case 1.2.4\label{app:gdc_case_1_2_2}] $v_1$ is not an extreme point and $v_\mya, v_\myb$ are both extreme points.

This case is analogous to \ref{app:gdc_case_1_2_0a}. Again, both, $v_\mya v_{\mya+1}$ and $v_\myb v_{\myb+1}$ must be outer edges and the regions under those two edges must be disjoint (see Figure~\ref{app:fig:gdc_problematic_1}(c)). Hence, one of the two flips -- replacing $v_\mya v_{\mya+1}$ by $v_{\mya-1}v_{\mya+1}$, or $v_\myb v_{\myb+1}$ by $v_{\myb-1}v_{\myb+1}$ must be \weight decreasing, say w.l.o.g. the one for $v_\mya$. Then, as in \ref{app:gdc_case_1_2_0a}, we apply the following flips:
\[
\text{flip } v_{\mya-1}v_\mya \text{ to } v_1v_\mya \qquad \text{ and } \qquad \text{flip } v_\mya v_{\mya+1} \text{ to } v_{\mya-1}v_{\mya+1}.
\]
\end{description}
\end{description}

\item[Case 2\label{app:gdc_case_2}] $v_1v_2$ is a spine edge.

In this case we will consider $P$ from both ends $v_1$ and $v_n$. 
Our general strategy here is to first rule out some easier cases and collect all those cases where we cannot immediately make progress. For these remaining \enquote{bad} cases we consider the setting from both ends of the path, i.e., we consider all combinations of bad cases.

\begin{description}
\item[Case 2.1\label{app:gdc_case_2_1}] $v_1, v_\mya$ are not extreme points. And $v_{\mya-1}$ is also not an extreme point.
Then we either 

\vspace*{0.1cm}
\begin{itemize}
\item flip $v_\mya v_{\mya+1}$ to $v_1v_{\mya+1}$, or \vspace*{0.1cm}
\item flip $v_{\mya-1}v_\mya$ to $v_1v_\mya$ and $v_\mya v_{\mya+1}$ to $v_{\mya-1}v_{\mya+1}$.
\end{itemize}
\vspace*{0.1cm}

By Observation~\ref{obs:three_neighbors}, one of the two choices decreases the overall \weight and all flips are valid because neither of the vertices $v_1, v_\mya, v_{\mya-1}$ is an extreme point and hence, the triangles $\Delta v_1v_\mya v_{\mya+1}$ and $\Delta v_{\mya-1}v_\mya v_{\mya+1}$ are empty.

\item[Case 2.2\label{app:gdc_case_2_2}] $v_1, v_\mya$ are not extreme points. And $v_{\mya-1}$ is an extreme point.

\begin{description}

\item[Case 2.2.1\label{app:gdc_case_2_2_1}] $v_{\mya-2}$ lies on the same chain as $v_1$. 

\begin{description}
\item[Case 2.2.1.1\label{app:gdc_case_2_2_1_1}] $d(v_{\mya-2}v_{\mya-1}) = 2$.

This implies $v_{\mya-2}$ to be an extreme point and there is only one other point in $S$ that does not belong to the chain of $v_{\mya-2}$ and~$v_{\mya-1}$; see Figure~\ref{app:fig:gdc_problematic_9}(a). This, however, implies that both triangles $\Delta v_1v_\mya v_{\mya+1}$ and $\Delta v_{\mya-1} v_\mya v_{\mya+1}$ are empty and hence, using Obervation~\ref{obs:valid_flips} and Observation~\ref{obs:three_neighbors}, we can perform \weight decreasing flips.

\item[Case 2.2.1.2\label{app:gdc_case_2_2_1_2}] $d(v_{\mya-2}v_{\mya-1}) > 2$.

Then we flip $v_{\mya-2}v_{\mya-1}$ to $v_1v_{\mya-1}$, which decreases the overall \weight, since $v_1v_{\mya-1}$ has \weight 2. See Figure~\ref{app:fig:gdc_problematic_9}(b).

\end{description}

\item[Case 2.2.2\label{app:gdc_case_2_2_2}] $v_{\mya-2}$ lies on a different chain as $v_1$. 

If $v_{\mya+1}$ lies in the exterior of $\Phi(v_\mya)$, then $v_{\mya-1},v_\mya, v_1,v_{\mya+1}$ share a common chain and we can apply the flips of \ref{app:gdc_case_2_1}. 

If $v_{\mya+1}$ lies in the interior of $\Phi(v_\mya)$ this constitutes our first \enquote{bad} case (note that Observation~\ref{obs:face} implies that the subpath $v_{\mya+1},\ldots, v_n$ does not contain any extreme point). Further, we can assume $v_{\mya-2}v_{\mya-1}$ also to be a spine edge, since otherwise we flip $v_{\mya-1}v_\mya$ to $v_1v_\mya$ and are back in \ref{app:gdc_case_1}. Also we can assume $v_\mya v_{\mya+1}$ to be an inner edge towards the $v_{\mya-1}$ chain, since otherwise we can make progress (using Observations~\ref{obs:valid_flips} and \ref{obs:three_neighbors}) because both triangles $\Delta v_1v_\mya v_{\mya+1}$ and $\Delta v_{\mya-1}v_\mya v_{\mya+1}$ would be empty by Lemma~\ref{app:lem:prel_gdc_1}(ii). See Figure~\ref{app:fig:gdc_problematic_9}(c).
\end{description}

\begin{figure}[t]
\centering
\includegraphics[scale=0.6,page=9]{gdc_problematic}
\caption{(a) \ref{app:gdc_case_2_2_1_1}. (b) \ref{app:gdc_case_2_2_1_2}. Replace $v_{\mya-2}v_{\mya-1}$ by $v_1v_{\mya-1}$. (c) The first \enquote{bad} case as described in \ref{app:gdc_case_2_2_2}.}
\label{app:fig:gdc_problematic_9}
\end{figure}

\begin{figure}[t]
\centering
\includegraphics[scale=0.58,page=10]{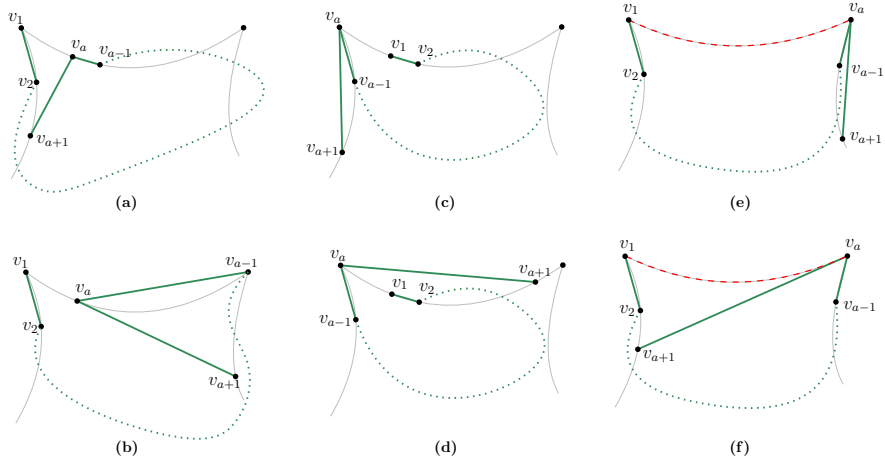}
\caption{(a,b) The \enquote{bad} cases of \ref{app:gdc_case_2_3}. Either way, $v_\mya v_{\mya+1}$ is in the interior of $\Phi(v_\mya)$. (c,d) The two \enquote{bad} cases of \ref{app:gdc_case_2_4}. The edge $v_\mya v_{\mya+1}$ is in the exterior of $\Phi(v_\mya)$. (e,f) The two \enquote{bad} cases of \ref{app:gdc_case_2_5}. The edge $v_\mya v_{\mya+1}$ can be in the interior (bottom) or exterior (top) of $\Phi(v_\mya)$. Recall that the dashed red arc shall emphasize that there is no vertex lying on this chain between the two extreme points $v_1, v_\mya$.}
\label{app:fig:gdc_problematic_10}
\end{figure}

\item[Case 2.3\label{app:gdc_case_2_3}] $v_1$ is an extreme point and $v_\mya$ is not an extreme point.

Again it suffices to consider the case that at least one of the triangles $\Delta v_1v_\mya v_{\mya+1}$ or $\Delta v_{\mya-1}v_\mya v_{\mya+1}$ is non-empty. Either way, again using Lemma~\ref{app:lem:prel_gdc_1}(ii), this can only happen, if $v_{\mya+1}$ is in the interior of $\Phi(v_\mya)$, more precisely, $v_\mya v_{\mya+1}$ is an inner edge towards one of the neighboring chains of the $v_\mya$ chain (where the one case is only relevant if $v_{\mya-1}$ is an extreme point); see Figure~\ref{app:fig:gdc_problematic_10}(a,b). This constitutes the second \enquote{bad} case. As before, there are no extreme points along the subpath $v_{\mya+1},\ldots, v_n$.

\item[Case 2.4\label{app:gdc_case_2_4}] $v_\mya$ is an extreme point and $v_1$ is not an extreme point.

If $v_\mya v_{\mya+1}$ is an inner edge both triangles $\Delta v_1v_\mya v_{\mya+1}$ and $\Delta v_{\mya-1}v_\mya v_{\mya+1}$ are empty and hence,
there are two \enquote{bad} cases to consider here, namely when $v_\mya v_{\mya+1}$ is an outer edge (see Figure~\ref{app:fig:gdc_problematic_10}(c,d)). 

\item[Case 2.5\label{app:gdc_case_2_5}] $v_1,v_\mya$ are extreme points.

Again, there are two \enquote{bad} cases to consider here, namely when $v_\mya v_{\mya+1}$ is an outer edge, or an inner edge to the $v_1,v_2$ chain (see Figure~\ref{app:fig:gdc_problematic_10}(e,f)).

\end{description}
\end{description}

Let us summarize the six \enquote{bad} cases, where we cannot immediately make progress (note that (II) comprises two cases but for the following arguments it will not be important to distinguish between them):

\begin{enumerate}
\item[(I)] $v_1, v_\mya$ are not extreme points, $v_{\mya-1}$ is an extreme point, $v_\mya v_{\mya+1}$ is an inner edge towards the $v_{\mya-1}$ chain that does not contain $v_1$, and $v_{\mya-1}v_{\mya-2}$ is a spine edge (Figure~\ref{app:fig:gdc_problematic_9}(c)).

\item[(II)] $v_1$ is an extreme point, $v_\mya$ is not an extreme point, and $v_\mya v_{\mya+1}$ is an inner edge towards a neighboring chain (in the one case $v_{\mya-1}$ must be extreme); Figure~\ref{app:fig:gdc_problematic_10}(a,b).

\item[(IIIa)] $v_1$ is not an extreme point, $v_\mya$ is an extreme point, and $v_\mya v_{\mya+1}$ is an outer edge on the $v_{\mya-1}$ chain (Figure~\ref{app:fig:gdc_problematic_10}(c)). 

\item[(IIIb)] $v_1$ is not an extreme point, $v_\mya$ is an extreme point, and $v_\mya v_{\mya+1}$ is an outer edge on the $v_1$ chain (Figure~\ref{app:fig:gdc_problematic_10}(d)). 

\item[(IVa)] $v_1,v_\mya$ are extreme points and $v_\mya v_{\mya+1}$ is an outer edge (Figure~\ref{app:fig:gdc_problematic_10}(e)).

\item[(IVb)] $v_1,v_\mya$ are extreme points and $v_\mya v_{\mya+1}$ is an inner edge to the $v_1$ chain (Figure~\ref{app:fig:gdc_problematic_10}(f)).
\end{enumerate}

In the remainder of the proof we settle these \enquote{bad} cases by arguing about both ends of the path, i.e, we consider all $\binom{6}{2} + 6 = 21$ combinations of these \enquote{bad} cases. Recall that the order at the $v_n$ end must be thought of as inverted, e.g., the analogue of $v_{\mya-1}v_\mya$ is $v_\myc v_{\myc+1}$ and the analogue of $v_\mya v_{\mya+1}$ is $v_{\myc-1} v_\myc$.

Further note that there two non-isomorphic ways two realize a combination of \enquote{bad} cases (see Figure~\ref{app:fig:gdc_problematic_7}). Furthermore, there are two ways to connect the \enquote{loose} ends of the fixed structures with each other (one case where we have $\mya < \myc$ and another where we have $\myc < \mya$). However, by Lemma~\ref{app:lem:GDC:neighbors}, only the former is of interest.

\begin{figure}[t]
\centering
\includegraphics[scale=0.75,page=7]{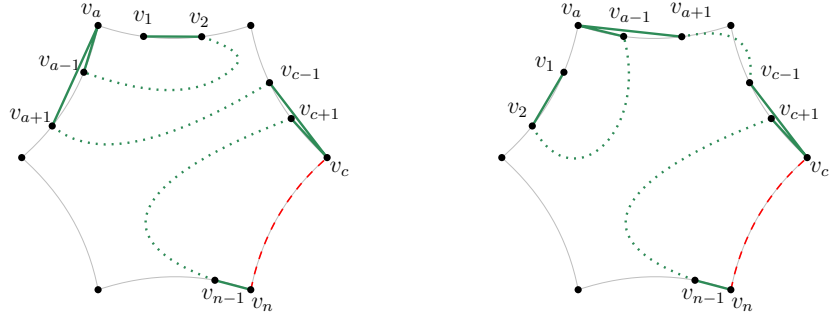}
\caption{
The two non-isomorphic ways to realize a combination of \enquote{bad} cases ((IIIa)+(IVa) is drawn here). Recall that the red dashed arc indicates that there are no more vertices on this chain.
}
\label{app:fig:gdc_problematic_7}
\end{figure}

Since we are not using the edges of the intermediate subpaths, we are mostly interested in the question whether or not a certain combination can be realized in a plane manner (in one of the two ways) or not.

Fortunately, we can immediately exclude several combinations with the following observations:
\begin{enumerate}
\item As noted above, we can assume $\mya < \myc$ and hence, no \enquote{bad} case where $v_{\mya+1}$ is in the interior of $\Phi(v_\mya)$ can be combined with a \enquote{bad} case having $v_n$ or $v_\myc$ as extreme point. Hence, we can exclude all combinations involving (I), (II), or (IVb), except (I)+(I). In fact, (I)+(I) is also not possible as the path starting from $v_1$ would first need to traverse $v_{\myc+1}$, then $v_\mya$ and then $v_\myc$, which is not possible.

\item If $v_\mya v_{\mya+1}$ is an outer edge, there is always an \enquote{easy} flip possible to either $v_1$ or $v_{\mya-1}$ (depending on which lies on the same chain as $v_\mya$ and $v_{\mya+1}$). Hence, similar to \ref{app:gdc_case_1_2_2}, we can rule out cases by showing that the regions (towards we flip $v_\mya v_{\mya+1}$ and $v_{\myc-1} v_\myc$ to) are disjoint. Also, we need to consider a subtle edge case, namely when $v_\mya v_{\mya+1}$ and $v_{\myc-1} v_\myc$ coincide. We consider the remaining six cases separately, as depicted in Figure~\ref{app:fig:gdc_problematic_8}. As depicted there, we have disjoint regions in all 6 remaining combinations except for the edge cases where the edges $v_\mya v_{\mya+1}$ and $v_{\myc-1} v_\myc$ coincide.
\end{enumerate}

\begin{figure}
\centering
\includegraphics[scale=0.65,page=8]{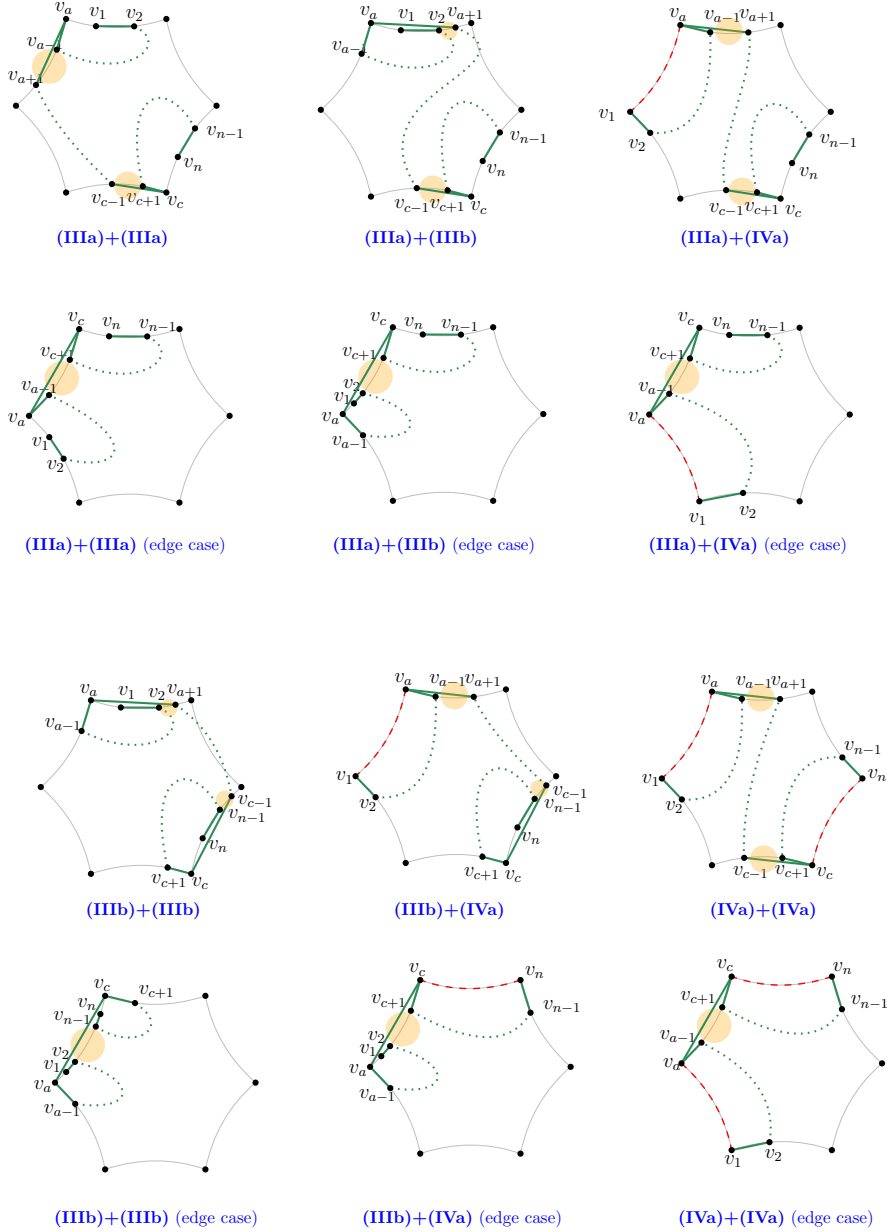}
\caption{The remaining combinations where we try to find disjoint regions each having to contain at least half of the point set. Edge cases are those cases where the edges $v_\mya v_{\mya+1}$ and $v_{\myc-1} v_\myc$ coincide.}
\label{app:fig:gdc_problematic_8}
\end{figure}

Hence, 1. and 2. rule out all possible cases, except the 6 edge cases from 2., where also these regions (containing at least $n/2$ points) coincide on the chain $CC(v_\mya, v_\myc)$. These can be resolved as follows. In all cases there exists an inner edge $v_{x-1}v_{x}$ along the path $v_2\ldots v_{\mya-1}$ that is \enquote{closest} to $v_1$ and $v_x$ lies on the chain $CC(v_\mya, v_\myc)$. By closest we mean that $v_1v_x$ does not intersect the path. Hence, flipping $v_{x-1}v_x$ to $v_1v_x$ forms a valid flip and at least one of the two choices (either from the $v_1$ end or $v_n$) decreases the \weight. This argument is analogous in all six edge cases, with the only difference that we may replace $v_1$ by $v_{\mya-1}$ by first flipping $v_{\mya-1}v_\mya$ to $v_1v_\mya$ if $v_1$ lies on the $CC(v_\mya, v_\myc)$ chain.

Recursively applying above process, we will eventually transform $P$ to a canonical path that consists only of spine edges (the only paths with minimum overall \weight). Doing the same for $Q$ and noting that any pair of canonical paths can be transformed into each other by a single flip, the connectedness of the flip graph follows.

Concerning the required number of flips, note that any edge has \weight at most $\frac{n}{2}-1$ and the path has $n-1$ edges. Hence, the total number of iterations to transform $P$ into a canonical path is at most
\[
\left(\left(n-1\right)\cdot\left(\frac{n}{2} - 2\right) + (n-1)\right) \in O(n^2)
\]
Furthermore, any iteration requires at most two flips and hence, the total number of flips to transform $P$ into $Q$ is still in $O(n^2)$.\qed
\end{proof}

\section{Conclusion}

In this paper, we made progress towards a positive answer of Conjecture~\ref{conj:main}, though it still remains open in general. We approached Conjecture~\ref{conj:main} from two directions and believe that Conjecture~\ref{conj:startedge} might be easier to tackle, e.g. for an inductive approach. 
For all our results we used only Type 1 and Type 2 flips (which can be simulated by Type 1 flips). It is an intriguing question whether Type 3 flips are necessary at all.

Concerning the approach of special classes of point sets, of course one can try to further adapt the ideas to other classes. Most of our results hold for the setting of spinal point sets; the main obstacle that remains in order to show flip connectivity in the \ref{item:p1} setting would be to adapt Lemma~\ref{app:lem:prel_gdc_1}. A proof for general point sets, however, seems elusive at the moment.

Lastly, there are several other directions for further research conceivable, e.g. considering simple drawings (or other types of drawings) instead of straight-line drawings.

\bibliography{bibliography}

\appendix

\section{Only convex point sets fulfill the empty triangle property}\label{app:sec:extreme_case}

\begin{lemma}\label{lem:extreme_case}
Let $S$ be a point set (in general position) such that there is a plane (not necessarily uncrossed) spanning cycle $\HC$ on $S$ and any triangle using an edge of $\HC$ is empty (i.e., does not contain a point of $S$). Then $S$ is in convex position.
\end{lemma}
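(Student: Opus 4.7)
The plan is to argue by contradiction: suppose $S$ is not in convex position, and pick a point $p \in S$ strictly interior to $\CH(S)$. The goal is to produce an edge $v_av_b$ of $\HC$ and a vertex $v_c \in S$ with $p \in \operatorname{int}(\triangle v_av_bv_c)$; this would witness a violation of the empty-triangle hypothesis and give the contradiction.

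The central geometric device is a wedge-covering argument at $p$. For each edge $v_av_b$ of $\HC$ not incident to $p$, define $R(v_av_b) := \{w : p \in \operatorname{int}(\triangle v_av_bw)\}$. This is an open wedge at $p$ bounded by the two rays emanating from $p$ in the directions opposite $v_a$ and opposite $v_b$ (i.e., the rays $pv_a$ and $pv_b$ extended past $p$), and it has angular extent $\angle v_apv_b$ at $p$. Crucially, two edges of $\HC$ that share a vertex produce wedges that share a bounding ray, so as one walks around $\HC$ the wedges link up into an angular chain around $p$.

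To establish that this chain covers the full $2\pi$ around $p$, I would perturb $p$ slightly to a point $p'$ in the interior of the simple polygon $P$ bounded by $\HC$; such a $p'$ exists whether $p$ is a convex or a reflex vertex of $P$. Since $p' \in \operatorname{int}(P)$, the Jordan curve $\HC$ winds exactly once around $p'$, so the signed angular change of the vector from $p'$ to the current vertex over one traversal of $\HC$ totals $\pm 2\pi$. Splitting this total into contributions from the $n-2$ edges of $\HC$ not at $p$ plus the two edges at $p$, and letting $p' \to p$, the wedges of the $n-2$ edges account for the full $2\pi$ around $p$ (the two edges at $p$ yield triangles that have $p$ as a vertex, not an interior point, and their angular contribution becomes the degenerate boundary between the accumulating wedges of the remaining edges).

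With the covering in hand, the conclusion is immediate: since $p$ is strictly interior to $\CH(S)$, in every angular direction from $p$ there is some point of $S$, so at least one $v_c \in S$ lies in some wedge $R(v_av_b)$. This gives $p \in \operatorname{int}(\triangle v_av_bv_c)$ with $v_av_b$ an edge of $\HC$, contradicting the hypothesis. The main obstacle I anticipate is making the winding/tiling step rigorous around the vertex $p$, especially verifying that the two edges of $\HC$ incident to $p$ do not leave an uncovered angular gap after perturbation; careful accounting (or alternatively a direct case analysis depending on whether $\HC$ contains any edge of $\CH(S)$, which would allow a simpler fan-triangulation argument) should close this gap.
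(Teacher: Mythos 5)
Your approach is genuinely different from the paper's and, unfortunately, has a real gap exactly at the step you flag as the obstacle.

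The paper's proof is a short ray-shooting argument: since $\HC$ bounds a non-convex simple polygon, it has a reflex vertex $r$; take a neighbor $s$ of $r$ on $\HC$; the ray from $s$ through $r$, extended past $r$, enters the interior of the polygon (because $r$ is reflex) and therefore exits through some edge $pq$ of $\HC$; then $r$ lies strictly inside $\Delta pqs$, a non-empty triangle on the spine edge $pq$. By contrast, you pick an \emph{arbitrary} interior point $p$ of $\CH(S)$ and try to show via a winding-number/wedge-covering argument that $p$ is interior to some triangle on a spine edge. This is more ambitious than what the lemma needs, and the extra generality is precisely where the argument breaks.

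The specific flaw is the claim that the two edges of $\HC$ incident to $p$ contribute only a ``degenerate boundary'' after letting $p' \to p$. That is false. Fix the approach direction $u$, i.e.\ $p' = p + \varepsilon u$. For the edge $v_1v_2$ with $v_1 = p$, the wedge $R(v_1v_2)$ at $p'$ is bounded by the ray from $p'$ away from $v_2$ (which converges to a fixed ray) and the ray from $p'$ away from $v_1 = p$, whose direction is exactly $u$ for every $\varepsilon > 0$. Its angular extent converges to the angle between $u$ and the ray $p\to v_2$ reversed, which is a fixed positive quantity, not $0$. The same holds for $R(v_np)$. Consequently, the wedges $R(v_av_b)$ over edges \emph{not} incident to $p$ do not cover $2\pi$; they leave an uncovered arc (roughly the arc from the ray opposite $pv_n$ to the ray opposite $pv_2$ that passes through the interior direction $u$, of angular measure $2\pi - \alpha$ where $\alpha$ is the interior angle of $\HC$ at $p$). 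Your argument does not rule out that every point of $S\setminus\{p\}$ lies in this uncovered arc, so the final step (``at least one $v_c$ lies in some wedge'') is unsupported. Relatedly, the sentence ``in every angular direction from $p$ there is some point of $S$'' is not true for a finite $S$ and is not the statement you actually need; what you need is that one of the finitely many direction vectors $p\to q$, $q\in S\setminus\{p\}$, lands in the union of wedges, and that is exactly what is at stake.

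The cleanest repair is to make the choice the paper makes: take $p$ to be a \emph{reflex} vertex of $\HC$ (one exists because $\HC$ is a non-convex simple polygon). Then the ray from a spine-neighbor $s$ of $p$ through $p$, continued past $p$, is trapped inside the polygon and must cross some edge $v_av_b$ of $\HC$; general position guarantees the crossing is in the relative interior of that edge and that $p\notin\{v_a,v_b\}$, $s\notin\{v_a,v_b\}$, so $p\in\operatorname{int}(\Delta v_av_bs)$ and $v_av_b$ is a spine edge. This avoids the perturbation and the covering argument entirely. (Alternatively, if you want to salvage the wedge picture, you must quantify the uncovered arc and argue separately that some point of $S$ avoids it; choosing $p$ reflex and $u$ appropriately is essentially the only way I see to do that cleanly, at which point you have rederived the paper's proof.)
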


\begin{proof}
Assume, for the sake of contradiction, that $S$ is not in convex position, and let $\HC$ be a plane spanning cycle (defining a non-convex simple polygon). It is well-known that any non-convex polygon has at least one reflex interior angle, say at vertex $r$ and further let $s$ be a neighbor of $r$ (along $\HC$). Cearly, the line through $rs$ intersects $\HC$ in an edge $pq$ and hence, either $\Delta pqr$ contains $s$ or $\Delta pqs$ contains $r$; a contradiction either way.
\end{proof}

\end{document}